\newtheorem{theorem}{Theorem}
\newtheorem{corollary}[theorem]{Corollary}
\newtheorem{definition}[theorem]{Definition}
\newcommand{\scrod}{\quad\nopagebreak}
\newenvironment{proof}
{\bigskip\noindent\textbf{Proof~}} {\marginpar{$\Box$}\bigskip}
\begin{document}

\date{}

\title{Approximating Multilinear Monomial Coefficients and
Maximum Multilinear Monomials in Multivariate Polynomials}

\author{Zhixiang Chen \hspace{8mm} Bin Fu
 \\ \\
Department of Computer Science\\
 University of Texas-Pan American\\
 Edinburg, TX 78539, USA\\
\{chen, binfu\}@cs.panam.edu\\\\
} \maketitle

\begin{abstract}
This paper is our third step towards developing a theory of
testing monomials in multivariate polynomials and concentrates on
two problems: (1) How to compute the coefficients of multilinear
monomials; and (2) how to find a maximum multilinear monomial when
the input is a $\Pi\Sigma\Pi$ polynomial. We first prove that the
first problem is \#P-hard and then devise a $O^*(3^ns(n))$ upper
bound for this problem for any polynomial represented by an
arithmetic circuit of size $s(n)$. Later, this upper bound is
improved to $O^*(2^n)$ for $\Pi\Sigma\Pi$ polynomials. We then
design fully polynomial-time randomized approximation schemes for
this problem for $\Pi\Sigma$ polynomials. On the negative side, we
prove that, even for $\Pi\Sigma\Pi$ polynomials with terms of
degree $\le 2$, the first problem cannot be approximated at all
for any approximation factor $\ge 1$, nor {\em "weakly
approximated''} in a much relaxed setting, unless P=NP. For the
second problem, we first give a polynomial time
$\lambda$-approximation algorithm  for $\Pi\Sigma\Pi$ polynomials
with terms of degrees no more a constant $\lambda \ge 2$. On the
inapproximability side, we give a $n^{(1-\epsilon)/2}$ lower
bound, for any $\epsilon >0,$ on the approximation factor for
$\Pi\Sigma\Pi$ polynomials. When terms in these polynomials are
constrained to degrees $\le 2$, we prove a $1.0476$ lower bound,
assuming $P\not=NP$; and a higher $1.0604$ lower bound, assuming
the Unique Games Conjecture.
\end{abstract}

\section{Introduction}
\subsection{Background}
We begin with two examples to exhibit the motivation and necessity
of the study about the monomial testing problem for multivariate
polynomials. The first is about testing a  $k$-path in any given
undirected graph $G=(V,E)$ with $|V| = n$, and the second is about
the satisfiability problem. Throughout this paper, polynomials
refer to those with multiple variables.

For any fixed integer $c\ge 1$, for each vertex $v_i \in V$,
define a polynomial $p_{k,i}$ as follows:
\begin{eqnarray}
p_{1,i}  &=&  x_i^c, \nonumber \\
p_{k+1,i} &=&  x_i^c   \left(\sum_{(v_i,v_j)\in E} p_{k,j}\right),
\ k
>1. \nonumber
\end{eqnarray}
We define a polynomial for $G$ as
\begin{eqnarray}
p(G, k)  &=&  \sum^{n}_{i=1} p_{k,i}. \nonumber
\end{eqnarray}
Obviously, $p(G,k)$ can be represented by an arithmetic circuit.
It is easy to see that the graph $G$ has a $k$-path $v_{i_1}\cdots
v_{i_k}$ iff $p(G, k)$ has a monomial $x_{i_1}^c\cdots
x_{i_k}^c$ of degree $ck$ in its sum-product expansion. $G$ has a
Hamiltonian path iff $p(G, n)$ has the monomial $x_1^c\cdots
x_n^c$ of degree $cn$ in its sum-product expansion. One can also
see that a path with some loop can be characterized by a monomial
as well. Those observations show that testing monomials in
polynomials is closely related to solving $k$-path, Hamiltonian
path and other problems about graphs. When $c=1$, $x_{i_1}\cdots
x_{i_k}$ is multilinear. The problem of testing multilinear
monomials has recently been exploited by Koutis \cite{koutis08}
and Williams \cite{williams09} to design innovative randomized
parameterized algorithms for the $k$-path problem.

Now, consider any CNF formula $f= f_1 \wedge \cdots \wedge f_m$, a
conjunction of $m$ clauses with each clause $f_i$ being a
disjunction of some variables or negated ones. We may view
conjunction as multiplication and disjunction as addition, so $f$
looks like a {\em "polynomial"}, denoted by $p(f)$. $p(f)$ has a
much simpler $\Pi\Sigma$ representation, as will be defined in the
next section, than general arithmetic circuits. Each {\em
"monomial"} $\pi = \pi_1 \ldots \pi_m$ in the sum-product
expansion of $p(f)$ has a literal $\pi_i$ from the clause $f_i$.
Notice that  a boolean variable $x \in Z_2$ has two properties of
$x^2 = x$ and $x \bar{x} = 0$. If we could realize these
properties for $p(f)$ without unfolding it into its sum-product,
then $p(f)$ would be a {\em "real polynomial"} with two
characteristics: (1) If $f$ is satisfiable then $p(f)$ has a
multilinear monomial, and (2) if $f$ is not satisfiable then
$p(f)$ is identical to zero. These would give us two approaches
towards testing the satisfiability of $f$. The first is to test
multilinear monomials in $p(f)$, while the second is to test the
zero identity of $p(f)$. However, the task of realizing these two
properties with some algebra to help transform $f$ into a needed
polynomial $p(f)$ seems, if not impossible, not easy. Techniques
like arithmetization in Shamir \cite{shamir92} may not be suitable
in this situation. In many cases, we would like to move from $Z_2$
to some larger algebra so that we can enjoy more freedom to use
techniques that may not be available when the domain is too
constrained. The algebraic approach within $Z_2[Z^k_2]$ in Koutis
\cite{koutis08} and Williams \cite{williams09} is one example
along the above line. It was proved in Bshouty {\em et al.}
\cite{bshouty95}  that extensions of  DNF formulas over $Z^n_2$ to
$Z_N$-DNF formulas over the ring $Z^n_N$ are learnable by a
randomized algorithm with equivalence queries, when $N$ is large
enough. This is possible because a larger domain may allow more
room to utilize randomization.

There has been a long history in theoretical computer science with
heavy involvement of studies and applications of polynomials. Most
notably, low degree polynomial testing/representing and polynomial
identity testing have played invaluable roles in many major
breakthroughs in complexity theory. For example, low degree
polynomial testing is involved in the proof of the PCP Theorem,
the cornerstone of the theory of computational hardness of
approximation and the culmination of a long line of research on IP
and PCP (see, Arora {\em at el.} \cite{arora98} and Feige {\em et
al.} \cite{feige96}). Polynomial identity testing has been
extensively studied due to its role in various aspects of
theoretical computer science (see, for examples, Chen and Kao
\cite{chen00}, Kabanets and Impagliazzo \cite{kabanets03}) and its
applications in various fundamental results such as Shamir's
IP=PSPACE \cite{shamir92} and the AKS Primality Testing
\cite{aks04}. Low degree polynomial representing
\cite{minsky-papert68} has been sought for so as to prove
important results in circuit complexity, complexity class
separation and subexponential time learning of boolean functions
(see, for examples, Beigel \cite{beigel93}, Fu\cite{fu92}, and
Klivans and Servedio \cite{klivans01}). These are just a few
examples. A survey of the related literature is certainly beyond
the scope of this paper.

\subsection{The First Two Steps}

The above two examples of the $k$-path testing and satisfiability
problems, the rich literature about polynomial testing and many
other observations have motivated us to develop a new theory of
testing monomials in polynomials represented by arithmetic
circuits or even simpler structures. The monomial testing problem
is related to, and somehow complements with, the low degree
testing and the identity testing of polynomials. We want to
investigate various complexity aspects of the monomial testing
problem and its variants with two folds of objectives. One is to
understand how this problem relates to critical problems in
complexity, and if so to what extent. The other is to exploit
possibilities of applying algebraic properties of polynomials to
the study of those critical problems.

As a first step towards testing monomials, Chen and Fu
\cite{chen-fu10} have proved a series of results: The multilinear
monomial testing problem for $\Pi\Sigma\Pi$ polynomials is
NP-hard, even when each clause has at most three terms and each
term has a degree at most $2$. The testing problem for $\Pi\Sigma$
polynomials is in P, and so is the testing for two-term
$\Pi\Sigma\Pi$ polynomials. However, the testing for a product of
one two-term $\Pi\Sigma\Pi$ polynomial and another $\Pi\Sigma$
polynomial is NP-hard. This type of polynomial products is, more
or less, related to the polynomial factorization problem. We have
also proved that testing $c$-monomials for two-term $\Pi\Sigma\Pi$
polynomials is NP-hard for any $c > 2$, but the same testing is in
P for $\Pi\Sigma$ polynomials. Finally, two parameterized
algorithms have been devised for three-term $\Pi\Sigma\Pi$
polynomials and products of two-term $\Pi\Sigma\Pi$ and
$\Pi\Sigma$ polynomials. These results have laid a basis for
further study about testing monomials.

In our subsequent paper, Chen {\em at al.} \cite{chen-fu10b}
present two pairs of algorithms. First, we prove that there is a
randomized $O^*(p^k)$ time algorithm for testing $p$-monomials in
an $n$-variate polynomial of degree $k$ represented by an
arithmetic circuit, while a deterministic $O^*(6.4^k + p^k)$ time
algorithm is devised when the circuit is a formula, here $p$ is a
given prime number. Second, we present a deterministic $O^*(2^k)$
time algorithm for testing multilinear monomials in
$\Pi_m\Sigma_2\Pi_t\times \Pi_k\Pi_3$ polynomials, while a
randomized $O^*(1.5^k)$ algorithm is given for these polynomials.
The first algorithm extends the recent work by Koutis
\cite{koutis08} and Williams \cite{williams09} on testing
multilinear monomials. Group algebra is exploited in the algorithm
designs, in corporation with the randomized polynomial identity
testing over a finite field by Agrawal and Biswas
\cite{agrawal-biswas03}, the deterministic noncommunicative
polynomial identity testing by Raz and Shpilka \cite{raz05} and
the perfect hashing functions by Chen {\em at el.}
\cite{jianer-chen07}. Finally, we prove that testing some special
types of multilinear monomial is W[1]-hard, giving evidence that
testing for specific monomials is not fixed-parameter tractable.

\subsection{Contributions}
Naturally, testing for the existence of any given monomial in a
polynomial can be carried out by computing the coefficient of that
monomial in the sum-product expansion of the polynomial. A zero
coefficient means that the monomial is not in the polynomial,
while a nonzero coefficient implies that it is. Moreover,
coefficients of monomials in a polynomial have their own
implications and are closely related to central problems in
complexity. As we shall exhibit later, the coefficients of
multilinear monomials correspond to counting perfect matchings in
a bipartite graph and to computing the permanent of a matrix.

Consider a $\Pi\Sigma\Pi$ polynomial $F$. $F$ may not have a
multilinear monomial in its sum-product expansion. However, one
can always find a multilinear monomial via selecting terms from
some clauses of  $F$, unless all the terms in each clause of $F$
are not multilinear or $F$ is simply empty. Here, the real
challenging is how to find a longest multilinear from the prod of
a subset of clauses in $F$. This problem is closely related to the
maximum independent set,  MAX-k-2SAT and other important
optimization problems in complexity.

Because of the above characteristics of monomial coefficients, we
concentrate on two problems in this paper:
\begin{enumerate}
\item How to compute the coefficients of multilinear monomials in
the sum-product expansion of a polynomial? \item How to
find/approximate a maximum multilinear monomial when the input is
a $\Pi\Sigma\Pi$ polynomial?
\end{enumerate}

For the first problem,  we first prove that it is \#P-hard and
then devise a $O^*(3^n s(n))$ time algorithm for this problem for
any polynomial represented by an arithmetic circuit of size
$s(n)$. Later, this  $O^*(3^ns(n))$ upper bound is improved to
$O^*(2^n)$ for $\Pi\Sigma\Pi$ polynomials.
 Two easy corollaries are derived directly from this $O^*(2^n)$
 upper bound. One gives an upper bound that matches the best known
$O^*(2^n)$ deterministic time upper bound, that was due to Ryser
\cite{ryser63} in early 1963, for computing the permanent of an
$n\times n$ matrix. The other gives an upper bound that matches
the best known $O^*(1.415^n)$ deterministic time upper bound, that
was also due to Ryser \cite{ryser63}, for counting the number of
perfect matchings in the a bipartite graph

We then design three fully polynomial-time randomized
approximation schemes. The first approximates the coefficient of
any given multilinear monomial in a $\Pi\Sigma$ polynomial. The
second approximates the sum of coefficients of all the multilinear
monomials in a $\Pi\Sigma$ polynomial. The third finds an
$\epsilon$-approximation to the coefficient of any given
multilinear monomial in a $\Pi_k\Sigma_a\Pi_t\times \Pi_m\Sigma_s$
polynomial with $a$ being a constant $\ge 2$.

On the negative side, we prove that, even for $\Pi\Sigma\Pi$
polynomials with terms of degree $\le 2$, the first problem cannot
be approximated {\em at all} regardless of the approximation
factor $\ge 1$. We then consider {\em "weak approximation''} in a
much relaxed setting, following our previous work on
inapproximability about exemplar breakpoint distance and exemplar
conserved interval distance of two genomes \cite{chen06,chen08}.
We prove that, assuming  $P\not=NP$, the first problem cannot be
approximated in polynomial time within any approximation factor
$\alpha(n)\ge 1$ along with any additive adjustment $\beta(n)\ge
0$, where $\alpha(n)$ and $\beta(n)$ are polynomial time
computable.

For the second problem, we first present a polynomial time
$\lambda$-approximation algorithm for $\Pi\Sigma\Pi$ polynomials
with terms of degrees no more a constant $\lambda \ge 2$. On the
inapproximability side, we give a $n^{(1-\epsilon)/2}$ lower
bound, for any $\epsilon >0,$ on the approximation factor for
$\Pi\Sigma\Pi$ polynomials. When terms in these polynomials are
constrained to degrees $\le 2$, we prove a $1.0476$ lower bound,
assuming $P\not=NP$.  We also prove a higher $1.0604$ lower bound,
assuming the Unique Games Conjecture.

\subsection{Organization}
The rest of the paper is organized as follows. In Section 2, we
introduce the necessary notations and definitions. In Section 3,
coefficients of multilinear monomials in polynomials are shown to
be related to perfect matchings in bipartite graphs and to the
permanents of matrices. Two parameterized algorithms are devised
for computing the coefficient of a multilinear monomial with
applications to counting perfect matchings and computing the
permanent of a matrix. In Section 4, we design three fully
polynomial-time randomized approximation algorithms. Sections 5
and 6 are devoted to inapproximability and weak inapproximability
for computing multilinear monomial coefficients. Section 7 focuses
on the problem of finding a maximum multilinear monomial in a
polynomial. One approximation algorithm and three lower bounds on
approximation factors are included.

\section{Notations and Definitions}\label{notations}

For variables $x_1, \dots, x_n$, let ${\cal P} [x_1,\cdots,x_n]$
denote the communicative ring of all the $n$-variate polynomials
with coefficients from a finite field ${\cal P}$. For $1\le i_1 <
\cdots <i_k \le n$, $\pi =x_{i_1}^{j_1}\cdots x_{i_k}^{j_k}$ is
called a monomial. The degree of $\pi$, denoted by
$\mbox{deg}(\pi)$, is $\sum^k_{s=1}j_s$. $\pi$ is multilinear, if
$j_1 = \cdots = j_k = 1$, i.e., $\pi$ is linear in all its
variables $x_{i_1}, \dots, x_{j_k}$. For any given integer
$\tau\ge 1$, $\pi$ is called a $\tau$-monomial, if $1\le j_1,
\dots, j_k < \tau$. In the setting of the {\em MAX-Multilinear
Problem} in Section \ref{max}, we need to consider the length of
the a monomial $\pi =x_{i_1}^{j_1}\cdots x_{i_k}^{j_k}$ as $|\pi|
= \sum^k_{\ell=1}\log( 1+ j_\ell)$. (Strictly speaking, $|\pi|$
should be $\sum^k_{\ell=1}\log( 1+ j_\ell)~\log n$. But, the
common $\log n$ factor can be dropped for ease of analysis.) When
$\pi$ is multilinear, $|\pi| = k$, i.e., the number of variables
in it.

For any polynomial $F(x_1,\ldots, x_n)$ and any monomial $\pi$, we
let $c(F, \pi)$ denote the coefficient of $\pi$ in the sum-product
of $F$, or in $F$ for short. If $\pi$ is indeed in $F$, then
$c(\pi) > 0$. If not, then $c(F, \pi)=0.$  We also let $S(F)$
denote the sum of the coefficients of all the multilinear
monomials in $F$. When  it is clear from the context, we use
$c(\pi)$ to stand for $c(F, \pi)$.

An arithmetic circuit, or circuit for short, is a direct acyclic
graph with $+$ gates of unbounded fan-in, $\times$ gates of fan-in
two, and all terminals corresponding to variables. The size,
denoted by $s(n)$, of a circuit with $n$ variables is the number
of gates in it. A circuit is called a {\rm formula}, if the
fan-out of every gate is at most one, i.e., its underlying direct
acyclic graph is a tree.

By definition, any polynomial $F(x_1,\dots,x_n)$ can be expressed
as a sum of a list of monomials, called the sum-product expansion.
The degree of the polynomial is the largest degree of its
monomials in the expansion. With this expression, it is trivial to
see whether $F(x_1,\dots,x_n)$ has a multilinear monomial (or a
monomial with any given pattern) along with its coefficient.
Unfortunately, this expression is essentially problematic and
infeasible to realize, because a polynomial may often have
exponentially many monomials in its expansion.

In general, a polynomial $F(x_1,\dots,x_n)$ can be represented by
a circuit or some even simpler structure as defined in the
following. This type of representation is simple and compact and
may have a substantially smaller size, say, polynomially in $n$,
in comparison with the number of all monomials in the sum-product
expansion. The challenge is how to test whether $F$ has a
multilinear monomial, or some other needed monomial, efficiently
without unfolding it into its sum-product expansion? The challenge
applies to finding coefficients of monomials in $F$.

Throughout this paper, the $O^*(\cdot)$ notation is used to
suppress $\mbox{poly}(n,k)$ factors in time complexity bounds.

\begin{definition}\scrod
Let $F(x_1,\dots,x_n)\in {\cal P}[x_1,\dots,x_n]$ be any given
polynomial. Let $m, s, t\ge 1$ be integers.
\begin{itemize}
\item $F(x_1,\ldots, x_n)$ is said to be a $\Pi_m\Sigma_s\Pi_t$
polynomial, if $F(x_1,\dots,x_n)=\prod_{i=1}^t F_i$, $F_i =
\sum_{j=1}^{r_i} X_{ij}$ and $1\le r_i \le s$, and $X_{ij}$ is a
product of variables with $\mbox{deg}(X_{ij})\le t$. We call each
$F_i$ a clause. Note that $X_{ij}$ is not a monomial in the
sum-product expansion of $F(x_1,\dots,x_n)$ unless $m=1$. To
differentiate this subtlety, we call $X_{ij}$ a term.

\item In particular, we say $F(x_1,\dots,x_n)=\prod_{i=1}^t F_i$
is a $\Pi_{m}\Sigma_s$ polynomial, if it is a
 $\Pi_m\Sigma_s\Pi_1$ polynomial. Here, each clause in $F_i$ is a linear addition
 of single variables. In other word, each term in $F_i$ has degree $1$.


\item $F(x_1,\dots,x_n)$ is called a $\Pi_m\Sigma_s\Pi_t \times
\Pi_k\Sigma_{\ell}$ polynomial, if $F(x_1,\dots,x_n) = F_1 \cdot
F_2$ such that $F_1$ is a $\Pi_m\Sigma_s\Pi_t$ polynomial and
$F_2$ is a $\Pi_k\Sigma_{\ell}$ polynomial.

\end{itemize}
\end{definition}

When no confusion arises from the context, we use $\Pi\Sigma\Pi$
and $\Pi\Sigma$ to stand for $\Pi_m\Sigma_s\Pi_t$ and
$\Pi_m\Sigma_s$, respectively.

Similarly, we use $\Pi\Sigma_s\Pi$ and $\Pi\Sigma_s$ to stand for
$\Pi_m\Sigma_s\Pi_t$ and $\Pi_m\Sigma_s$ respectively, emphasizing
that every clause in a polynomial has at most $s$ terms or is a
linear addition of at most $s$ single variables.



\section{Multilinear Monomial Coefficients, Perfect Matchings and Permanents}

In this section, we show that the problem of computing the
coefficients of multilinear monomials in a $\Pi\Sigma\Pi$
polynomial is closely related to the problem of counting the
number of perfect matchings in a bipartite graph and to the
permanent of a matrix with nonnegative entries. We first shall
prove that computing the coefficient of any given multilinear
monomial in a $\Pi\Sigma\Pi$ polynomial is \#P-hard. We then
devise a $O^*(3^n~s(n))$ time fixed parameter algorithm for
computing coefficients for multilinear monomials in a polynomial
represented by an arithmetic circuit of size $s(n)$. This upper
bound is further improved to $O^*(2^n)$ for $\Pi\Sigma\Pi$
polynomials. As two simply corollaries of this latter upper bound,
we have an $O^*(1.45^n)$ to find the number of perfect matchings
in any given bipartite graph, and a $O^*(2^n)$ time algorithm for
computing the permanent of any $n\times n$ matrix.

\begin{theorem}\label{sharp-p-hard}
Let $F(x_1, \ldots, x_n)$ be any given $\Pi_m\Sigma_s\Pi_2$
polynomial.  It is \#P-hard to compute the coefficient of any
given multilinear monomial in the sum-product of $F$.
\end{theorem}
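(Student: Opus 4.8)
The plan is to reduce from a known \#P-hard counting problem—the natural choice is counting perfect matchings in a bipartite graph (equivalently, computing the permanent of a 0/1 matrix), which is \#P-hard by Valiant. Given a bipartite graph $G$ with parts $U=\{u_1,\dots,u_n\}$ and $W=\{w_1,\dots,w_n\}$ and edge set $E$, I want to build a $\Pi_m\Sigma_s\Pi_2$ polynomial $F$ and single out a multilinear monomial $\pi$ so that $c(F,\pi)$ equals (a simple function of) the number of perfect matchings of $G$. The guiding intuition, already flagged in the paper's Section 3 discussion, is that a multilinear monomial arising from a product of clauses is exactly a system of distinct representatives, i.e. a matching.

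First I would introduce one variable $y_i$ for each left vertex $u_i$ and one variable $z_j$ for each right vertex $w_j$, so $2n$ variables total. For each left vertex $u_i$ I create a clause $F_i = \sum_{(u_i,w_j)\in E} y_i z_j$; note each term $y_i z_j$ has degree $2$, so this is legitimately a $\Pi_n\Sigma_s\Pi_2$ polynomial with $s = \max_i \deg_G(u_i)$. Now expand $F = \prod_{i=1}^n F_i$: a term in the sum-product is obtained by choosing, for each $i$, an edge $(u_i, w_{j_i})\in E$, contributing $\prod_i y_i z_{j_i} = (y_1\cdots y_n)\cdot(z_{j_1}\cdots z_{j_n})$. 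The $y$-part is always the full product $y_1\cdots y_n$, so the monomial is multilinear precisely when $z_{j_1},\dots,z_{j_n}$ are distinct—i.e. when $j_1,\dots,j_n$ is a permutation—which means the chosen edges form a perfect matching. Taking $\pi = y_1\cdots y_n z_1 \cdots z_n$, we get $c(F,\pi) = $ (number of perfect matchings of $G$), since every perfect matching contributes exactly $1$ to this coefficient and no other configuration contributes to it. Hence computing this one coefficient solves the permanent/perfect-matching counting problem, giving \#P-hardness.

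The steps, in order: (1) state the reduction from counting bipartite perfect matchings; (2) define the variables and the clauses $F_i$; (3) verify $F$ is a $\Pi_m\Sigma_s\Pi_2$ polynomial with the claimed parameters; (4) analyze the sum-product expansion and establish the bijection between the configurations contributing to $\pi$ and the perfect matchings of $G$; (5) conclude $c(F,\pi) = \mathrm{perm}(A_G)$ where $A_G$ is the biadjacency matrix, and invoke Valiant's theorem. I expect step (4)—the bookkeeping that shows the coefficient of $\pi$ counts each perfect matching with multiplicity exactly one and receives no spurious contributions—to be the only real content; it is not hard, but it is where one must be careful that the construction isolates matchings and nothing else. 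A minor point worth addressing explicitly is that all coefficients in $F$ are $1$ (or nonnegative), so there is no cancellation, which is exactly why the coefficient genuinely counts rather than merely detects. One could alternatively phrase the whole reduction directly in terms of the permanent of a nonnegative matrix, which is the framing the surrounding section adopts; either version yields the theorem.
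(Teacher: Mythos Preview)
Your proposal is correct and follows essentially the same construction as the paper: one variable per vertex on each side of the bipartite graph, one clause $F_i=\sum_{(u_i,w_j)\in E} y_i z_j$ per left vertex, and the target monomial $\pi=y_1\cdots y_n z_1\cdots z_n$ whose coefficient equals the number of perfect matchings by Valiant's theorem. The paper's argument is identical up to notation (it uses $x_i,y_j$ for the two sides and writes the bijection as two containments), so there is nothing to add.
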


\begin{proof}
It is well known (see Valiant \cite{valiant79}) that the problem
of counting the number of perfect matchings in a bipartite graph
is \#P-hard. We shall reduce this counting problem to the problem
of computing coefficient of a multilinear monomial in a
polynomial.  Let $G=(V_1\cup V_2, E)$ be any given bipartite
graph. We construct a polynomial $F$ as follows.

Assume that $V_1=\{v_1,\cdots, v_{t}\}$ and $V_2=\{u_1,\cdots,
u_{t}\}$. Each vertex $v_i\in V_1$ is represented by a variable
$x_i$, so is $u_i\in V_2$ by a variable $y_i$.  For every vertex
$v_i\in V_1$, define a polynomial
\begin{eqnarray}
F_i &= &\sum_{(v_i, u_j)\in E} x_iy_j. \nonumber
\end{eqnarray}
Define a polynomial for the graph $G$ as
\begin{eqnarray}
F(G) = F_1 \cdots F_t. \nonumber
\end{eqnarray}
 Let $n = 2t$, $m=t$, and $s$ be maximum degree of the vertices in $V_1$.
 It is easy to see that $F(G)$ is a
$n$-variate $\Pi_m\Sigma_s\Pi_2$ polynomial.

 Now, suppose that $G$ has a perfect matching $(x_1, y_{i_1}),
 \ldots, (x_t, y_{i_{t}})$. Then, we can choose
 $\pi_j = x_j y_{i_j}$ from $F_j$, $1\le j\le t$. Thus,
\begin{eqnarray}
\pi = \pi_1\cdot \pi_2 \cdots  \pi_{t} = x_1x_2\cdots x_t
y_1y_2\cdots y_t\nonumber
\end{eqnarray}
is a multilinear monomial in $F(G)$. Hence, the number of perfect
matchings in $G$ is at most $c(\pi)$, i.e., the coefficient of
$\pi$ in $F(G)$. On the other hand, suppose that $F(G)$ has a
multilinear monomial
\begin{eqnarray}
\pi = \pi'_1 \cdot \cdots \cdot \pi'_{t} = x_1x_2\cdots x_t
y_1y_2\cdots y_t \nonumber
\end{eqnarray}
in its sum-product expansion with $\pi'_j$ being a term from
$F_j$, $1\le j\le t$. By the definition of $F_j$,  $\pi'_j = x_j
y_{i_j}$, meaning that vertices $v_j$ and $u_{i_j}$ are directly
connected by the edge $(j, i_j)$. Since $\pi'$ is multilinear,
$y_{i_1}, \ldots, y_{i_t}$ are distinct. Hence, $(x_1, y_{i_1}),
\ldots, (x_t, y_{i_t})$ constitute a perfect matching in $G$.
Hence, the coefficient $c(\pi)$ of $\pi$ in $F(G)$ is at most the
number of perfect matchings in $G$. Putting the above analysis
together, we have that $G$ has a perfect matching iff $F(G)$ has a
copy of the multilinear monomial $\pi = x_1x_2\cdots x_t
y_1y_2\cdots y_t$ in its sum-product expansion. Moreover, $G$ has
$c(\pi)\ge 0$ many perfect matchings iff the multilinear monomial
$\pi$ has a coefficient $c(\pi)$ in the expansion. Therefore, by
Valiant's \#P-hardness of counting the number of perfect matchings
in a bipartite graph \cite{valiant79},  computing the coefficient
of $\pi$ in $F(G)$ is \#P-hard.
\end{proof}

\begin{theorem}\label{coefficient-alg-theorem}
There is a $O^*(s(n) 3^n )$ time algorithm to compute the
coefficients of all multilinear monomials in a polynomial $F(x_1,
\ldots, x_n)$ represented by an arithmetic circuit $C$ of size
$s(n)$.
\end{theorem}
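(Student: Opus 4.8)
The plan is to perform a bottom-up dynamic program over the circuit $C$ in which, at every gate, we store the entire table of coefficients of the \emph{multilinear} monomials of the subpolynomial computed at that gate. Concretely, for a gate $g$ computing a polynomial $F_g$, maintain a table $T_g$ indexed by subsets $S\subseteq\{1,\dots,n\}$, where $T_g[S]$ is the coefficient of the multilinear monomial $\prod_{i\in S}x_i$ in $F_g$; the desired output is $T_r$ for the output gate $r$. The observation that makes it sound to track only multilinear monomials is this: if a product $m_1 m_2$ of monomials is multilinear, then each of $m_1,m_2$ is itself multilinear and their variable supports are disjoint (the exponent of each variable in the product is the sum of the two exponents, so both must be $\le 1$ and not simultaneously $1$). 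Hence a non-multilinear monomial at a lower gate can never contribute to a multilinear monomial higher up, and it is safe to discard it.

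First I would fix a topological order of the gates of $C$ and process them in that order; we may also assume, by a standard transformation that increases the size only polynomially, that every $+$ gate has fan-in two. For a terminal labelled $x_i$, set $T_g[\{i\}]=1$ and $T_g[S]=0$ otherwise. For a $+$ gate with children $g_1,g_2$, set $T_g[S]=T_{g_1}[S]+T_{g_2}[S]$, which costs $O(2^n)$ field operations. For a $\times$ gate with children $g_1,g_2$, the coefficient of $\prod_{i\in S}x_i$ in $F_{g_1}F_{g_2}$ is given by the disjoint-union (subset) convolution
\[
T_g[S]=\sum_{S=S_1\,\uplus\,S_2}T_{g_1}[S_1]\,T_{g_2}[S_2],
\]
since, by the observation above, any monomial pair contributing to $\prod_{i\in S}x_i$ splits $S$ into two disjoint blocks, one drawn from each factor. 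Computing this directly, by enumerating for each $S$ all $2^{|S|}$ of its subsets, costs $\sum_{S\subseteq\{1,\dots,n\}}2^{|S|}=\sum_{k=0}^{n}{n\choose k}2^k=3^n$ field operations per $\times$ gate, and hence $O^*(s(n)\,3^n)$ over the whole circuit; the $+$ gates and terminals together add only $O^*(s(n)\,2^n)$.

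Correctness follows by induction on the topological order: the terminal case is immediate, the $+$ case is linearity of the coefficient functional, and the $\times$ case is exactly the displayed convolution identity combined with the multilinearity observation. Since ${\cal P}$ is a finite field — and even in the integer-valued applications the coefficients have $\mathrm{poly}(n)$ bit-length — each arithmetic operation takes $\mathrm{poly}(n)$ time, so the total running time is $O^*(s(n)\,3^n)$ as claimed. I expect the $\times$-gate step to be the main obstacle: the $3^n$ factor is precisely the cost of the naive subset convolution, and any improvement (such as the $O^*(2^n)$ bound obtained later for $\Pi\Sigma\Pi$ polynomials) must come from speeding up or bypassing this convolution, e.g.\ by exploiting extra structure of the polynomial rather than treating the circuit as a black box.
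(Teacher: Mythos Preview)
Your proposal is correct and essentially identical to the paper's proof: both perform a bottom-up sweep over the circuit, retain at each gate only the table of multilinear-monomial coefficients, handle $+$ gates in $O^*(2^n)$, and handle $\times$ gates by the disjoint-subset convolution whose naive cost is $\sum_{k}\binom{n}{k}2^{k}=3^n$. The only cosmetic differences are that you phrase things via subset-indexed tables and reduce $+$ gates to fan-in two, whereas the paper keeps unbounded fan-in for $+$ and describes the same computation in terms of lists of terms.
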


\begin{proof}
We consider evaluating $F$ from $C$ via a bottom-up process.
Notice that at most $2^n$ many multilinear monomials can be formed
with $n$ variables. For each addition gate $g$ in $C$ with fan-ins
$f_1, \ldots, f_s$, we may assume that  each $f_i$ is a sum of
multilinear terms, i.e., products of distinct variables. This
assumption is valid, because we can discard all the terms in $f_i$
that are not multilinear since we are only interested in
multilinear monomials in the sum-product expansion of $F$. We
simply add $f_1 + \cdots + f_s$ via adding the coefficients of the
same terms together. Since there are at most $2^n$ many
multilinear monomials (or terms), this takes $O(n 2^n)$ times.

Now we consider a multiplication gate $g'$ in $C$ with fan-ins
$h_1$ and $h_2$. As for the addition gates, we may assume  that
$h_i$ is a sum of multilinear terms, $i=1, 2$. For each term $\pi$
with degree $\ell$ in $h_1$, we only need to multiply it with
terms in $h_2$ whose degrees are at most $n-\ell$. If the
multiplication yields a non-multilinear term then that term is
discarded, because we are only interested in multilinear terms in
the expansion of $F$. This means that a term $\pi$ of degree
$\ell$ in $h_1$ can be multiplied with at most $2^{n-\ell}$
possible terms in $h_2$. Let $m_i$ denote the number of terms in
$h_1$ with degree $i$, $1\le i \le n$. Then, evaluating $h_1 \cdot
h_2$ for the multiplication gate $g'$ takes time at most
\begin{eqnarray}\label{exp-1}
&& O(n~(m_1~ 2^{n-1} + m_2~ 2^{n-2} + \cdots + m_{n-1}~ 2^{1})).
\end{eqnarray}
Since there are at most $(^n_i)$ terms with degree $i$ with
respect to $n$ variables, expression (\ref{exp-1}) is at most
\begin{eqnarray} 
&& O(n~[(^n_1)~ 2^{n-1} + (^n_2)~ 2^{n-2} + \cdots + (^n_{n-1})~
2^{n-n}])
\nonumber \\
&& = O(n~ \sum^n_{i=1} (^n_i) 2^{n-i}) = O(n~ 3^n). \nonumber
\end{eqnarray}

Since $C$ has $s(n)$ gates, the total time for the entire
evaluation of $F$ for finding all its multilinear monomials with
coefficients is $O(n s(n) 3^n) = O^*(s(n) 3^n)$.
\end{proof}

The time bound in Theorem \ref{coefficient-alg-theorem} can be
improved when $\Pi\Sigma\Pi$ polynomials are considered.

\begin{theorem}\label{thm3}
Let $F(x_1, \ldots, x_n)$ be any given $\Pi_m\Sigma_s\Pi_t$
polynomial. One can find coefficients of all the multilinear
monomials in the sum-product expansion of $F$ in $O^*(2^n)$ time.
\end{theorem}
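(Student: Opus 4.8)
The plan is to process the clauses of $F=F_1\cdots F_m$ one at a time with a bottom-up dynamic program, maintaining at each stage only the multilinear part of the partial product, stored as a table indexed by the subsets of $\{1,\ldots,n\}$. Write $F_i=\sum_{j=1}^{r_i}X_{ij}$ with $r_i\le s$ and each $X_{ij}$ a product of variables of degree at most $t$. First, delete from every clause each term $X_{ij}$ that is not squarefree; this is sound because a multilinear monomial in the sum-product expansion of $F$ arises as a product of exactly one term chosen from each clause, and a product of monomials is squarefree only when every chosen term is itself squarefree and the chosen terms have pairwise disjoint supports. After this pruning each surviving term is a squarefree monomial with coefficient $1$, which we identify with its variable support $A_{ij}\subseteq\{1,\ldots,n\}$. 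For $0\le j\le m$ define $g_j\colon 2^{\{1,\ldots,n\}}\to{\cal P}$ by letting $g_j(S)$ be the coefficient of $\prod_{i\in S}x_i$ in the sum-product expansion of $F_1\cdots F_j$; then $g_0$ is the indicator of $\emptyset$, and the quantity we must output is the whole table $g_m$.

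For the recurrence, note that $g_j$ is the multilinear truncation of $g_{j-1}\cdot F_j=\sum_{j'}g_{j-1}\cdot X_{jj'}$, so it suffices to multiply the multilinear polynomial $g_{j-1}$ by a single squarefree monomial with support $A$ and then add up the at most $s$ resulting tables. But multiplying by such a monomial is just a support shift: the multilinear part of the product assigns the value $g_{j-1}(S)$ to the subset $S\cup A$ for every $S$ disjoint from $A$, and $0$ to every subset not arising this way. Hence one monomial multiplication is a single sweep over the $2^n$ entries of $g_{j-1}$, costing $O^*(2^n)$, and forming $g_j$ from $g_{j-1}$ costs $O^*(2^n)$ in total. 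Degenerate cases are harmless: a clause all of whose squarefree terms have degree $0$ merely scales the table by a field constant, and a clause with no squarefree term forces $g_m\equiv 0$; moreover, if every surviving term has positive degree then $m>n$ would already force $F$ to have no multilinear monomial. Summing the $O^*(2^n)$ per-clause cost over all $m$ clauses — with $m$, like $s$, bounded by the size of the given representation of $F$ and hence absorbed by $O^*$ — the total running time is $O^*(2^n)$, and the coefficients of all multilinear monomials are read off directly from the final table $g_m$.

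The one point worth isolating is why this improves on the $O^*(3^n\,s(n))$ bound of Theorem~\ref{coefficient-alg-theorem}. In a general circuit a multiplication gate may combine two arbitrary multilinear polynomials, and the multilinear part of such a product is a subset convolution, whose straightforward evaluation ranges over all $3^n$ disjoint pairs $(S,T)$. The gain for $\Pi\Sigma\Pi$ polynomials is purely structural: every multiplication step introduces just one clause, and inside a clause we only ever multiply by a single monomial, which is a support shift costing $O^*(2^n)$ rather than a full convolution. (Alternatively one could multiply whole clauses at once using the $O^*(2^n)$-time subset-convolution algorithm via ranked zeta and M\"obius transforms, but for clauses with a bounded number of terms this is unnecessary.) I do not expect a genuine obstacle here; the only mildly delicate parts are the support bookkeeping in the monomial-shift step and the handling of the degenerate clauses noted above.
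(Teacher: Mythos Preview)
Your proposal is correct and is essentially the same sequential-clause approach the paper takes: process the clauses one at a time, keep only the multilinear part of the running product (at most $2^n$ terms, stored as your table $g_j$), and note that multiplying this by one clause with at most $s$ terms of degree at most $t$ costs $O(st\cdot 2^n)$, giving $O(mst\cdot 2^n)=O^*(2^n)$ overall. Your write-up is a bit more explicit about the subset-indexed table and the support-shift interpretation of monomial multiplication, but the underlying argument matches the paper's.
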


\begin{proof}
Let $F(x_1, \ldots, x_n) = \prod^m_{i=1} F_i$ such that $F_i =
\sum^s_{j=1} T_{ij}$ and $T_{ij}$ is a term of degree at most $t$.
We first consider $F_{m-1} \cdot F_m$. Like what is done for the
multiplication  gate in the proof of Theorem
\ref{coefficient-alg-theorem}, we multiply each term in $F_{m-1}$
with every term in $F_m$. We discard all  the resulting terms that
are non-multilinear, because we are only interested in multilinear
terms in $F$. Let $G_{m-1}$ be the sum of all the remaining
multilinear terms from $F_{m-1} \cdot F_m$. Then, $G_{m-1}$ can
have at most $s^2 \le 2^n$ many terms. Also, the time needed to
obtain $G_{m-1}$ is $O(ts^2) = O(ts2^n)$. Next, following the same
approach, we do $F_{m-2} \cdot G_{m-1}$ and let $G_{m-2}$ be the
sum of all the remaining multilinear terms. The time needed to
obtain $G_{m-2}$ is $O(ts2^n)$. Continue this process to $F_1
\cdot G_{2}$, we will have $G_{1}$ as the sum of all the remaining
multilinear terms that constitute all the multilinear monomials
along with their respective coefficients in the sum-product
expansion of $F$. The time for this last step also $O(ts2^n)$. The
total time for the entire process is $O(mts2^n) = O^*(2^n)$.
\end{proof}




\begin{corollary}\label{corollary-2}
There is a $O^*(1.415^n)$ time algorithm to compute the exact
number of perfect matchings in a bipartite graph $G=(V_1\cup V_2,
E)$ with $n=2|V_1|=2|V_2|$ vertices.
\end{corollary}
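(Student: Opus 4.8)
The plan is to revisit the reduction from the proof of Theorem~\ref{sharp-p-hard} and run the bottom-up multiplication of Theorem~\ref{thm3} on the resulting polynomial, but with a sharper accounting that exploits its special structure. Recall that, given a bipartite graph $G = (V_1 \cup V_2, E)$ with $V_1 = \{v_1,\dots,v_t\}$, $V_2 = \{u_1,\dots,u_t\}$ and $n = 2t$, one sets $F_i = \sum_{(v_i,u_j)\in E} x_i y_j$ and $F(G) = F_1 \cdots F_t$, and the number of perfect matchings in $G$ equals $c(F(G),\pi)$ where $\pi = x_1 \cdots x_t\, y_1 \cdots y_t$. So it suffices to compute this one coefficient fast.

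The key observation is that $x_i$ occurs only in the clause $F_i$, and it occurs to the first power in \emph{every} term of $F_i$. Hence any product $\pi'_1 \cdots \pi'_t$ of one term per clause automatically contains each of $x_1,\dots,x_t$ exactly once, and such a product is multilinear if and only if the $y$-variables picked from the $t$ clauses are pairwise distinct. Therefore, when we compute $G_1 = F_1$ and $G_i = $ the multilinear part of $G_{i-1}\cdot F_i$ exactly as in Theorem~\ref{thm3}, every surviving term of $G_i$ has the form $x_1 \cdots x_i\, y_{j_1}\cdots y_{j_i}$ with the $j$'s distinct, so $G_i$ carries at most ${t \choose i} \le 2^t$ distinct terms, each naturally indexed by a size-$i$ subset of $\{1,\dots,t\}$.

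Concretely, I would first fix a subset-indexed (bitmask) representation of the intermediate polynomials $G_i$. Then I would carry out the $t$ multiplication steps: at step $i$, multiply the at-most-$t$-term clause $F_i$ into $G_{i-1}$, discard every resulting term that repeats a $y$-variable, and merge coefficients of equal terms; this touches at most $t\cdot 2^t$ term pairs and costs $\mbox{poly}(n)$ time per pair. Finally I would output the coefficient of $x_1\cdots x_t\, y_1\cdots y_t$ in $G_t$, which by the reduction is exactly the number of perfect matchings in $G$. The total running time is $O^*(t\, 2^t) = O^*(2^{n/2}) = O^*((\sqrt{2})^{\,n})$, which lies within the claimed $O^*(1.415^n)$ bound since $\sqrt{2} = 1.41421\ldots < 1.415$.

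I do not expect a genuine obstacle here: the whole effect is a halving trick — the $x$-variables impose no multilinearity constraint of their own, so the state space of the dynamic program collapses from the generic $2^n$ of Theorem~\ref{thm3} down to $2^{n/2}$. The only steps that take any care are the subset-indexing/hashing of the $G_i$'s (routine) and the numeric check $\sqrt{2} < 1.415$ that legitimizes the stated exponent.
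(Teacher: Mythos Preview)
Your argument is correct, but the paper takes a slightly cleaner route to the same destination. Rather than keeping the $2t$-variable polynomial $F(G)=\prod_i \sum_j x_iy_j$ from Theorem~\ref{sharp-p-hard} and then arguing that the $x$-variables contribute nothing to the multilinearity constraints (so the dynamic-program state collapses to subsets of the $y$'s), the paper simply drops the $x$-variables at the outset: it sets $H_i=\sum_{(v_i,u_j)\in E} x_j$ (one variable per vertex of $V_2$ only) and $H(G)=H_1\cdots H_{n/2}$, which is already an $(n/2)$-variate $\Pi\Sigma$ polynomial. Then Theorem~\ref{thm3} applies as a black box and immediately gives $O^*(2^{n/2})$. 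Your sharper accounting inside the algorithm and the paper's smaller polynomial are really the same observation --- the $x_i$'s are dummies --- but the paper avoids having to reopen the proof of Theorem~\ref{thm3}.
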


\begin{proof}
Let $m= n/2$, $V_1=\{v_1,\ldots,v_m\}$ and
$V_2=\{u_1,\ldots,u_m\}$. For each vertex $u_i\in V_2$, we define
a variable $x_i$. For each vertex $v_i \in V_1$, construct a
polynomial
\begin{eqnarray}
 H_i&=&x_{i_1}+x_{i_2}+\cdots+x_{i_{\ell_i}}, \nonumber
\end{eqnarray}
where $(v_i, u_{i_j})\in E$ for $j=1,\cdots, \ell_i$ and $v_i$ has
exactly $\ell_i$ adjacent vertices in $G$. Define
\begin{eqnarray}
 H(G)&=& H_1\cdots H_{n/2}. \nonumber
\end{eqnarray}
Then, $H(G)$ is a $(\frac{n}{2})$-variate $\Pi_{n/2}\Sigma_s\Pi_1$
polynomial, where $s = max\{\ell_i\} \le n/2$. Following a similar
analysis as in the proof of Theorem \ref{sharp-p-hard}, $G$ has a
perfect matching iff $H(G)$ has the multilinear monomial
$x_1x_2\cdots x_{n/2}$ in its sum-product expansion. Moreover,
when there is a perfect matching, the number of perfect matchings
in $G$ is the same as the coefficient of $x_1x_2\cdots x_{n/2}$.
Therefore, by Theorem \ref{thm3}, one can find the exact number of
perfect matchings in $G$ in time $O^*(2^{n/2})= O^*(1.415^n)$.
\end{proof}

The upper bound in Corollary \ref{corollary-2} matches the best
known deterministic upper bound of Ryser \cite{ryser63} for
counting perfect matchings in a bipartite graph. The best known
deterministic algorithm to compute the permanent of an $n\times n$
matrix is Ryser Algorithm \cite{ryser63} with $O^*(2^n)$ time
complexity that was devised almost 50 years ago. A corollary of
Theorem \ref{thm3} implies an algorithm for computing the
permanent of any matrix with the same time bound as Ryser
algorithm does. Notice that when defining $\Pi\Sigma\Pi$
polynomials in Section 2, we let the coefficients of all the terms
in each clause to be 1 for ease of description. In fact, Theorems
\ref{coefficient-alg-theorem} and \ref{thm3} still hold when
arbitrary coefficients are allowed for terms in clauses of the
input polynomial.

\begin{corollary}{permanent}
The permanent of any given $n\times n$ matrix is computable in
time $O^*(2^n)$.
\end{corollary}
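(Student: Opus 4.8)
The plan is to realize the permanent of an $n \times n$ matrix $A = (a_{ij})$ as the coefficient of a specific multilinear monomial in an appropriately constructed $\Pi\Sigma\Pi$ polynomial, and then invoke Theorem~\ref{thm3}. Recall that $\mbox{perm}(A) = \sum_{\sigma \in S_n} \prod_{i=1}^n a_{i\sigma(i)}$, which is exactly the sum over all perfect matchings in the complete bipartite graph $K_{n,n}$ with edge weights $a_{ij}$. So the corollary is essentially the weighted analogue of Corollary~\ref{corollary-2}, and the proof should follow the same template.

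Concretely, I would introduce variables $x_1, \dots, x_n$ (one per column index), and for each row $i$ define the clause
\begin{eqnarray}
F_i &=& \sum_{j=1}^n a_{ij}\, x_j. \nonumber
\end{eqnarray}
Set $F(x_1,\dots,x_n) = F_1 \cdots F_n$. This is a $\Pi_n\Sigma_n\Pi_1$ polynomial (a $\Pi\Sigma$ polynomial), except that its terms carry the arbitrary coefficients $a_{ij}$ rather than all $1$'s. As the remark preceding the corollary points out, Theorems~\ref{coefficient-alg-theorem} and~\ref{thm3} still hold when arbitrary coefficients are allowed on the terms, so this is permissible. Expanding the product, a monomial $\prod_{i=1}^n x_{\sigma(i)}$ is multilinear precisely when $\sigma$ is a permutation of $\{1,\dots,n\}$, and picking the term $a_{i\sigma(i)} x_{\sigma(i)}$ from each clause $F_i$ contributes $\prod_{i=1}^n a_{i\sigma(i)}$ to the coefficient of $x_1 x_2 \cdots x_n$. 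Summing over all permutations, the coefficient $c(F, x_1 x_2 \cdots x_n)$ equals exactly $\mbox{perm}(A)$.

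It then remains to apply Theorem~\ref{thm3}: with $n$ variables, the coefficients of all multilinear monomials in $F$ — in particular that of $x_1 \cdots x_n$ — are computable in $O^*(2^n)$ time, which gives $\mbox{perm}(A)$ in $O^*(2^n)$ time. I should note that any non-multilinear monomials produced during the bottom-up evaluation are simply discarded, so the correctness of reading off $c(F, x_1\cdots x_n)$ is unaffected; and that the $\log$ factors hidden in $O^*$ absorb the cost of arithmetic on the entries of $A$ (assuming the usual convention that entries have $\mbox{poly}(n)$-size representations, or measuring in arithmetic operations otherwise).

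The only real subtlety — and the thing worth stating explicitly rather than an obstacle per se — is the extension of Theorem~\ref{thm3} to terms with arbitrary coefficients. The argument there carries over verbatim: when multiplying a term $c_1 \pi_1$ by a term $c_2 \pi_2$ one forms the product $c_1 c_2 (\pi_1 \pi_2)$, keeping it only if $\pi_1\pi_2$ is multilinear, and like terms have their coefficients added; the number of surviving terms after each of the $m = n$ stages is still bounded by $2^n$, and the per-stage cost is unchanged up to the cost of coefficient arithmetic. Hence the $O^*(2^n)$ bound is preserved, and the corollary follows. This also recovers Ryser's classical $O^*(2^n)$ bound as a special case of our framework, which is the point of stating it.
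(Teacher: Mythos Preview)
Your proposal is correct and essentially identical to the paper's own proof: the paper defines $R_i = a_{i1}x_1 + \cdots + a_{in}x_n$ and $P(A) = R_1 \cdots R_n$, observes that $c(P(A), x_1\cdots x_n) = \mbox{perm}(A)$, and applies Theorem~\ref{thm3} to the resulting $\Pi_n\Sigma_n\Pi_1$ polynomial. Your write-up is in fact more careful than the paper's, particularly in spelling out why Theorem~\ref{thm3} extends to terms with arbitrary coefficients and in accounting for the cost of coefficient arithmetic.
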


\begin{proof}
Let $A = (a_{ij})_{n\times n}$ be an $n \times n$ matrix with
nonnegative entries $a_{ij}$, $1\le i, j\le n$. Design a variable
$x_i$ for row $i$ and define polynomials in the following:
\begin{eqnarray}
R_i & = & (a_{i1}x_1 + \cdots + a_{in}x_n), \nonumber \\
P(A) &= & R_1 \cdots R_n. \nonumber
\end{eqnarray}
Let $\mbox{perm}(A)$ denote the permanent of $A$. It follows from
the above definitions that the coefficient of the multilinear
monomial $\pi = x_1\cdots x_n$ is precisely $c(\pi) =
\mbox{perm}(A)$. Since $R(A)$ is a $\Pi_n\Sigma_n\Pi_1$
polynomial, by Theorem \ref{thm3}, we have the $O^*(2^n)$ time
bound for computing $\mbox{perm}(A)$.
\end{proof}

The reduction in the proof of Corollary \ref{corollary-2} implies
the following result that somehow strengthens Theorem
\ref{sharp-p-hard}:

\begin{corollary}\label{corollary-3}
It is \#P-hard to computing the coefficient of any given
multilinear monomial in an $n$-variate $\Pi_m\Sigma_s$ polynomial.
\end{corollary}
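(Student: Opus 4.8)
The plan is to observe that Corollary \ref{corollary-3} is an almost immediate consequence of the construction already carried out in the proof of Corollary \ref{corollary-2}, combined with Valiant's \#P-hardness result invoked in Theorem \ref{sharp-p-hard}. Recall that in that proof, given a bipartite graph $G = (V_1 \cup V_2, E)$ with $|V_1| = |V_2| = m$, we built a $\Pi_m\Sigma_s\Pi_1$ polynomial $H(G) = H_1 \cdots H_m$, where each $H_i = \sum_{(v_i, u_{i_j}) \in E} x_{i_j}$ is a linear addition of single variables, hence a clause of degree $1$ in the $\Pi\Sigma$ sense. Since a $\Pi_m\Sigma_s\Pi_1$ polynomial is by definition a $\Pi_m\Sigma_s$ polynomial, this construction already produces a $\Pi\Sigma$ polynomial.

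The key step is then the correctness correspondence, which was already established in the proof of Corollary \ref{corollary-2}: $G$ has a perfect matching if and only if $H(G)$ contains the multilinear monomial $x_1 x_2 \cdots x_m$ in its sum-product expansion, and when such a matching exists, the coefficient $c(H(G), x_1 \cdots x_m)$ equals exactly the number of perfect matchings in $G$. I would restate this correspondence briefly and note that the construction of $H(G)$ from $G$ is clearly polynomial time. Therefore, an algorithm computing the coefficient of a given multilinear monomial in an arbitrary $\Pi_m\Sigma_s$ polynomial could be used to count perfect matchings in an arbitrary bipartite graph. By Valiant \cite{valiant79}, the latter is \#P-hard, so the former is \#P-hard as well.

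There is really no main obstacle here; the only thing to be careful about is bookkeeping about which class the polynomial $H(G)$ falls into. One should note that $H(G)$ has $m = n/2$ variables (where $n = |V_1| + |V_2|$ is the vertex count of $G$), but this does not affect \#P-hardness, since hardness is measured in the size of the input graph and the reduction is polynomial. I would also point out, as the surrounding text already does for Theorems \ref{coefficient-alg-theorem} and \ref{thm3}, that allowing arbitrary (nonnegative) coefficients on the single-variable terms does not change anything, so the same argument works verbatim if one instead reduces from computing the permanent of a nonnegative integer matrix via the polynomial $P(A)$ of Corollary \ref{permanent}. The proof is thus a two or three sentence pointer back to the already-completed reduction, emphasizing that $\Pi_m\Sigma_s\Pi_1 \subseteq \Pi_m\Sigma_s$ and that $\Pi\Sigma$ polynomials form a strict subclass of $\Pi\Sigma\Pi$ polynomials, which is why this strengthens Theorem \ref{sharp-p-hard} (there the hard instances were $\Pi_m\Sigma_s\Pi_2$, here they are the simpler $\Pi_m\Sigma_s$).
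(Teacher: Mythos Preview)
Your proposal is correct and matches the paper's approach exactly: the paper gives no separate proof but simply states that ``the reduction in the proof of Corollary~\ref{corollary-2} implies'' the result, which is precisely the observation you spell out---namely that $H(G)$ is already a $\Pi_m\Sigma_s$ polynomial and its coefficient on $x_1\cdots x_m$ counts perfect matchings. Your added remarks about bookkeeping and the alternative permanent-based reduction are sound but go beyond what the paper provides.
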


\section{Fully Polynomial-Time Approximation Schemes for $\Pi\Sigma$ Polynomials}

In this section, we show that in contrast to Theorem
\ref{sharp-p-hard} and Corollary \ref{corollary-3}, fully
polynomial-time randomized approximation schemes {\em ("FPRAS'')}
exist for solving the problem of finding coefficients of
multilinear monomials in a $\Pi\Sigma$ polynomial and some
variants of this problem as well. An FPRAS ${\cal A}$ is a
randomized algorithm, when given any $n$-variate polynomial $F$
and a monomial $\pi$ together with an accuracy parameter $\epsilon
\in (0, 1]$, outputs a value ${\cal A}(F,\pi,\epsilon)$ in time
$\mbox{poly}(n, 1/\epsilon)$ such that with high probability
$$
(1-\epsilon)c(\pi) \le {\cal A}(F, \pi, \epsilon) \le (1+\epsilon)
c(\pi).
$$

\begin{theorem}\label{thm4}
There is an FPRAS for finding the coefficient of  any given
multilinear monomial in a $\Pi_m\Sigma_s$ polynomial $F(x_1,
\ldots, x_n)$.
\end{theorem}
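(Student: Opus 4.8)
The plan is to set up an importance-sampling estimator over the product structure of the $\Pi_m\Sigma_s$ polynomial. Write $F = F_1 \cdots F_m$ with each $F_i = x_{i_1} + \cdots + x_{i_{r_i}}$ a sum of $r_i \le s$ single variables. A monomial in the sum-product expansion of $F$ is obtained by choosing one variable from each clause; such a choice is a tuple $\sigma = (\sigma_1,\ldots,\sigma_m)$ with $\sigma_i$ a variable of $F_i$, and it contributes to $\pi = x_{j_1}\cdots x_{j_k}$ exactly when the multiset $\{\sigma_1,\ldots,\sigma_m\}$ equals the set $\{x_{j_1},\ldots,x_{j_k}\}$ (so in particular $k \le m$, each variable of $\pi$ is picked at least once, and no variable outside $\pi$ is ever picked). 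Thus $c(\pi)$ is the number of such ``good'' tuples. First I would note the easy necessary conditions (each $x_{j_\ell}$ must appear in at least one clause, and discarding any variable not in $\pi$ from every clause changes nothing); after this pruning, $c(\pi)$ counts the surjective-style assignments from the $m$ clauses onto the $k$ variables of $\pi$ where clause $i$ may only use variables it actually contains.

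The sampling scheme is: independently for each clause $i$, pick $\sigma_i$ uniformly from the (pruned) variables of $F_i$, so the tuple $\sigma$ is drawn uniformly from a product set of size $N = \prod_i r_i'$ (where $r_i'$ is the pruned clause size), computable exactly. Let $X$ be the indicator that $\sigma$ is good; then $\mathbb{E}[X] = c(\pi)/N$, so $N \cdot (\text{average of } T \text{ i.i.d. copies of } X)$ is an unbiased estimator of $c(\pi)$. The issue is that $\mathbb{E}[X]$ may be exponentially small, so a naive Monte Carlo needs exponentially many samples. The fix is to reduce the sample space: I would instead fix, for each variable $x_{j_\ell}$ of $\pi$, one ``responsible'' clause that is forced to output $x_{j_\ell}$, summing (or sampling) over the assignment of responsibilities; equivalently, condition the uniform draw on the event that $x_{j_\ell}$ is covered, clause by clause, à la the classic Karp--Luby DNF-counting estimator. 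Concretely, I would follow the Karp--Luby union-of-sets framework: let $A_\ell$ be the set of tuples in which clause $\ell'$'s chosen variable equals $x_{j_\ell}$ for a designated injection $\ell \mapsto \ell'$; write the set of good tuples as contained in a small union whose individual sizes are computable, sample a ``(set, element)'' pair, and use the standard coverage estimator. This yields variance bounded by a polynomial (in fact by $m$, the number of clauses that must be hit) times the mean, hence $O(m^2/\epsilon^2 \cdot \log(1/\delta))$ samples suffice by Chebyshev plus median amplification, each sample evaluable in $\mathrm{poly}(n,m,s)$ time.

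The key steps, in order, are: (1) prune $F$ so every clause contains only variables of $\pi$ and record the pruned sizes $r_i'$; if some $x_{j_\ell}$ is absent from all clauses, output $0$. (2) Express $c(\pi)$ as the cardinality of a union $\bigcup_\ell S_\ell$ (or a suitable refinement) of sets each of whose sizes is computable in closed form as a product of clause sizes. (3) Implement the Karp--Luby estimator: sample $\ell$ with probability proportional to $|S_\ell|$, sample a tuple uniformly from $S_\ell$, and output the coverage indicator scaled by $\sum_\ell |S_\ell|$. (4) Bound the variance: the coverage ratio is at least $1/m$ since each good tuple lies in at most $m$ of the $S_\ell$, so the ratio of second moment to squared mean is $O(m)$. (5) Take the mean of $O(m/\epsilon^2)$ samples and apply the median trick over $O(\log(1/\delta))$ independent runs to get the $(1\pm\epsilon)$ guarantee with high probability in time $\mathrm{poly}(n,1/\epsilon)$.

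The main obstacle I anticipate is step (2): getting the good tuples to be exactly a union of ``few'' easily-counted sets. The subtlety is that $\pi$ may have fewer variables than there are clauses, so many clauses are ``free'' and the constraint is really a covering (hitting-set) condition — every variable of $\pi$ gets chosen by at least one clause — rather than a clean bijection. I would handle this by indexing the sets $S_\ell$ not by the variables of $\pi$ but by (variable of $\pi$, clause containing it) pairs, so $|S_\ell|$ is simply the product of the other clauses' pruned sizes; then a tuple is good iff it lies in at least one such $S_\ell$ for each variable, and the Karp--Luby bound still applies with overlap multiplicity at most $m$. Verifying that this refinement keeps the sizes exactly computable and the variance polynomially controlled is the one place the argument needs genuine care; everything else is the standard FPRAS template.
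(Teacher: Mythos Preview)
Your proposal has a genuine gap, and it also rests on a misconception about the structure of the problem.

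First, the misconception: in a $\Pi_m\Sigma_s$ polynomial every monomial in the sum-product expansion is a product of exactly one variable from each of the $m$ clauses, hence has degree exactly $m$. So if the target multilinear monomial $\pi$ has $k$ variables, then $c(\pi)=0$ unless $k=m$. There is no ``covering'' case with $k<m$ and free clauses; the only interesting case is $k=m$, and after your pruning step the problem is precisely to count systems of distinct representatives for the family $(F_1',\ldots,F_m')$, i.e., perfect matchings in the bipartite graph with clauses on one side and the variables of $\pi$ on the other. Equivalently, $c(\pi)$ is the permanent of the $m\times m$ $0$--$1$ incidence matrix.

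Second, the gap: your Karp--Luby scheme does not estimate this quantity. With $S_{(x_j,i)}$ defined as the set of tuples in which clause $i$ chooses $x_j$, a tuple is good iff it lies in at least one $S_{(x_j,i)}$ \emph{for every} variable $x_j$; this is an intersection of unions, $\bigcap_j \bigcup_i S_{(x_j,i)}$, not a union, and Karp--Luby handles only unions. Concretely, under your sampling rule (pick $\ell$ with probability proportional to $|S_\ell|$, then a uniform tuple in $S_\ell$), every tuple $\sigma$ in the pruned product space lies in exactly $m$ of the sets $S_{(x_j,i)}$ (one per clause), so the marginal distribution on $\sigma$ is uniform over the whole product space. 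Your ``coverage'' estimator therefore collapses to the naive Monte Carlo estimator you already rejected, with success probability $c(\pi)/N$ that can be exponentially small (e.g., when every pruned clause is $x_1+\cdots+x_m$, the ratio is $m!/m^m$). No elementary union-of-products trick is known to beat this; if it did, it would give an elementary FPRAS for the $0$--$1$ permanent.

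The paper's proof takes exactly the reduction you set up in step~(1) but then stops: once $c(\pi)$ is identified with the number of perfect matchings in the bipartite graph, it invokes the Jerrum--Sinclair--Vigoda FPRAS for the permanent of a nonnegative matrix. That MCMC machinery is doing the real work here, and it is not replaceable by a Karp--Luby argument.
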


\begin{proof}
Let $F(x_1, \ldots, x_n) = \prod_{i=1}^m F_i$ such that $F_i
=\sum^{s_i}_{j=1} x_{ij}$ with $s_i \le s$. Notice that any
monomial in the sum-product expansion of $F$ will have exactly one
variable from each clause $F_i$. This allows us to focus on
multilinear monomials with exactly $m$ variables. Let $\pi=x_{i_1}
\cdots x_{i_m}$ be such a multilinear monomial. We consider how to
test whether $\pi$ is in $F$, and if so, how to find its
coefficient $c(\pi)$.

For each $F_i$, we eliminate all the variables that are not
included in $\pi$ and let $F'_i$ be the resulting clause and $F' =
F'_1 \cdots F'_m$.  If one clause $F'_i$ is empty, then we know
that $\pi$ must not be a in the expansion of $F'$, nor in $F$. Now
suppose that all clauses $F'_i$, $1\le i\le m$, are not empty. We
shall reduce $F'$ to a bipartite graph $G = (V_1\cup V_2, E)$ as
follows. Define $V_1=\{v_1, \ldots, v_m\}$ and $V_2=\{u_1, \ldots,
u_m\}$. Here, each vertex $v_i$ corresponds to the clause $F'_i$,
and each vertex $u_j$ corresponds to the variable $x_{j}$. Define
an edge $(v_i, u_j)$ in $E$ if $x_j$ is in $F_i$.

Suppose that $\pi$ is a multilinear monomial in $F$ (hence in
$F'$). Then, each $x_{i_j}$ in $\pi$ is in a distinct clause
$F_{t_j}$, $1\le j\le m$. This implies that edges $(v_{t_j},
u_{i_j})$, $1\le j\le m$, constitute a perfect matching in $G$. On
the other hand, if edges $(v_{t_j}, u_{i_j}), 1\le j\le m$ form a
perfect matching in $G$, then we have that $x_{i_j}$ is in the
clause $F_{t_j}$. Hence, $\pi = x_{i_1}\cdots x_{i_m}$ is a
multilinear monomial in $F'$ (hence in $F$). This equivalence
relation further implies that the number of perfect matchings in
$G$ is the same as the coefficient of the multilinear monomial
$\pi$ in $F$. Thus, the theorem follows from any fully
polynomial-time randomized approximation scheme for computing the
number of perfect matchings in a bipartite graph, and such an
algorithm can be found in Jerrum {em at el.} \cite{jerrum04}.
\end{proof}

In the following we shall consider how to compute the sum $S(F)$
of the coefficients of all the multilinear monomials in a
$\Pi\Sigma$ polynomial $F$.

\begin{theorem}\label{thm5}
There is an FPRAS, when given any $n$-variate $\Pi_m\Sigma_s$
polynomial $F(x_1, \ldots, x_n)$, computes $S(F)$.
\end{theorem}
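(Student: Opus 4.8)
The plan is to realize $S(F)$ as a fixed scalar multiple of the permanent of a single nonnegative integer matrix, and then invoke the FPRAS for the permanent due to Jerrum, Sinclair and Vigoda \cite{jerrum04}, much as in the proof of Theorem \ref{thm4} but now for one global quantity rather than the coefficient of a single prescribed monomial. Write $F = \prod_{i=1}^m F_i$ with $F_i = \sum_{j=1}^{s_i} x_{ij}$. Every monomial in the sum-product expansion is obtained by picking one term from each clause, so it has the form $\prod_{i=1}^m x_{i j_i}$, and it is multilinear exactly when the variables $x_{1 j_1}, \dots, x_{m j_m}$ are pairwise distinct. Hence $S(F)$ equals the number of tuples $(j_1, \dots, j_m)$ for which these $m$ variables are distinct, i.e. the number of ways to pick a distinct variable from each clause (when a variable occurs more than once in a clause this becomes a weighted count, which does not affect the argument below).

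First I would dispose of degenerate cases: if $n < m$ no multilinear monomial can exist, so the algorithm simply outputs $0$; hence assume $n \ge m$. Let $B$ be the $m \times n$ matrix with $B_{ij}$ equal to the number of occurrences of $x_j$ in $F_i$, and let $B'$ be the $n \times n$ matrix obtained from $B$ by appending $n-m$ all-ones rows. Expanding the permanent of $B'$ and noting that any injective assignment of the first $m$ rows into $[n]$ extends to a permutation of $[n]$ in precisely $(n-m)!$ ways gives
\[
\mbox{perm}(B') \;=\; (n-m)! \cdot \sum_{\tau} \; \prod_{i=1}^m B_{i\,\tau(i)} \;=\; (n-m)!\cdot S(F),
\]
where $\tau$ ranges over the injections from $\{1,\dots,m\}$ to $\{1,\dots,n\}$. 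Thus $S(F) = \mbox{perm}(B')/(n-m)!$, and $(n-m)!$ is an integer of $O(n\log n)$ bits that is computed exactly.

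Finally, run the FPRAS of \cite{jerrum04} on the nonnegative matrix $B'$ with accuracy $\epsilon$ to obtain, in time $\mbox{poly}(n,1/\epsilon)$, a value $\widehat P$ with $(1-\epsilon)\,\mbox{perm}(B') \le \widehat P \le (1+\epsilon)\,\mbox{perm}(B')$ with high probability, and output $\widehat P/(n-m)!$; exact division by the constant $(n-m)!$ preserves the relative error, so this is an $\epsilon$-approximation of $S(F)$. Empty clauses (a zero row of $B'$), the case $n=m$ (where $B'=B$ is already square), and repeated variables in clauses are all handled automatically. The substantive content is the identity $S(F)=\mbox{perm}(B')/(n-m)!$; the one point that needs care — and the main difference from Theorem \ref{thm4}, where fixing the target monomial already pins down a balanced bipartite graph — is the rectangular-to-square padding, since here we are summing over all choices of which $m$ variables occur and must check that this aggregate is still a single permanent rather than an exponential sum of permanents.
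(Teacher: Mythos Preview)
Your proof is correct and is essentially the same as the paper's. The paper multiplies $F$ by $H^{n-m}$ with $H=x_1+\cdots+x_n$, observes that the only possible multilinear monomial in $F'=F\cdot H^{n-m}$ is $x_1\cdots x_n$ with coefficient $S(F)\,(n-m)!$, and then invokes Theorem~\ref{thm4}; your all-ones padding of $B$ to $B'$ is exactly the coefficient matrix of that $F'$, so you are carrying out the same construction but unrolling the reduction of Theorem~\ref{thm4} and appealing to the permanent FPRAS of \cite{jerrum04} directly.
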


\begin{proof}
Let $F(x_1, \ldots, x_n) = \prod_{i=1}^m F_i$ such that $F_i
=\sum^{s_i}_{j=1} x_{ij}$ with $s_i \le s$. Since every monomial
in the sum-product expansion of $F$ consists of exactly one
variable from each clause $F_j$, if $m>n$ then $F$ must not have
any multilinear in its expansion. Thus, we may assume that $m\le
n$, because otherwise $F$ will have no multilinear monomials. Let
$H =(x_1 + \cdots + x_n)$. Define
\begin{eqnarray}
F'(x_1, \ldots, x_n) &=& F\cdot H^{n-m} = F_1 \cdots F_m \cdot
H^{n-m}. \nonumber
\end{eqnarray}
Then, $F'$ is a $\Pi_n\Sigma_n$ polynomial. For any given
multilinear monomial
\begin{eqnarray}
\pi &=& x_{i_1} \cdots x_{i_m} \nonumber
\end{eqnarray}
 in $F$ with $x_{i_j}$
belonging to the clause $F_j$, $1\le j\le m$, let $x_{i_{m+1}},
\ldots, x_{i_{n-m}}$ be the $n-m$ variables that are not included
in $\pi$, then
\begin{eqnarray}
\pi' &=& x_{i_1} \cdots x_{i_m} \cdot x_{i_{m+1}} \cdots
x_{i_{n-m}} = x_1 x_2 \cdots  x_n \nonumber
\end{eqnarray}
is a multilinear monomial in $F'$. Because $F'$ have $n$ clauses
with $n$ variables, the only multilinear monomial  that may be
possibly contained in $F'$ is the multilinear monomial $\pi'=x_1
x_2 \cdots x_n$. If $F'$ indeed has the multilinear monomial
$\pi'$ with $x_{i_j}$ in the clause $F_j$, $1\le j\le m$, then
$\pi = x_{i_1}\cdots x_{i_m}$ is a multilinear monomial in $F$.
This relation between $\pi$ and $\pi'$ is also reflected by the
relation between the coefficient $c(\pi)$ of $\pi$ in the
expansion of $F$ and the efficient $c(\pi')$ of $\pi'$ in the
expansion of $F'$. Precisely, the coefficient $c(\pi)$ of $\pi$ in
$F$ implies that there are $c(\pi)$ copies of $x_{i_1}\cdots
x_{i_{m}}$ for the choices of the first $m$ variables in $\pi'$.
Each additional variable $x_{i_j}$, $m+1\le j\le n-m$, is selected
from one copy of the clause $H$. Since $H = (x_1 + \cdots x_n)$,
there are $(n-m)!$ ways to select these $(n-m)$ variables from
$(n-m)$ copies of $H$ in $F'$. Hence, $\pi$ contributes a value of
$c(\pi)(n-m)!$ to the coefficient of $\pi'$ in $F'$. Adding the
contributions of all the multilinear monomials in $F$ to $\pi'$ in
$F'$ together, we have that the coefficient of $\pi$ in $F'$ is
$S(F) \cdot (n-m)!$. By Theorem \ref{thm4}, there is an FPRAS to
compute the coefficient of $\pi'$ in $F'$. Dividing the output of
that algorithm by $(n-m)!$ gives the needed approximation to
$S(F)$.
\end{proof}

We now extend Theorem \ref{thm5} to $\Pi\Sigma\Pi \times
\Pi\Sigma$ polynomials.

\begin{theorem}\label{thm6}
Let $F(x_1, \ldots, x_n)$ be $\Pi_k\Sigma_a\Pi_t \times
\Pi_m\Sigma_s$ polynomial with $a\ge 2$ being a constant.  There
is a $O(a^k \mbox{poly}(n, 1/\epsilon))$ time FPRAS that finds an
$\epsilon$-approximation for the coefficient of any given
multilinear monomial $\pi$ in the sum-product $F$ if $\pi$ is in
$F$, or returns {\em "no''} otherwise. Here, $0\le \epsilon<1$ is
any given approximation factor.
\end{theorem}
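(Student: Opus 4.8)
The plan is to reduce the problem over a $\Pi_k\Sigma_a\Pi_t \times \Pi_m\Sigma_s$ polynomial to the setting already handled by Theorem~\ref{thm5}, at the price of an $a^k$ blow-up coming from expanding only the first factor. Write $F = F_1 \cdot F_2$ where $F_1$ is a $\Pi_k\Sigma_a\Pi_t$ polynomial $\prod_{i=1}^k G_i$ with each $G_i = \sum_{j=1}^{a} T_{ij}$ a sum of at most $a$ terms of degree $\le t$, and $F_2$ is a $\Pi_m\Sigma_s$ polynomial. Since $a$ and $k$ give at most $a^k$ ways to pick one term $T_{i,j_i}$ from each clause $G_i$, the sum-product expansion of $F_1$ has at most $a^k$ monomials; enumerate them explicitly, discarding any that are non-multilinear or that use a variable not occurring in the target monomial $\pi$. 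For each surviving monomial $\mu$ of $F_1$ with its (integer) coefficient $c_1(\mu)$, the variables it consumes are fixed, so the contribution of $\mu$ to $c(F,\pi)$ is $c_1(\mu)$ times the coefficient of the complementary monomial $\pi/\mu$ in $F_2$. If $\pi/\mu$ is not a genuine monomial (i.e.\ $\mu$ does not divide $\pi$) the contribution is zero, so only the $O(a^k)$ divisors of $\pi$ arising this way matter.

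The key step is then to invoke Theorem~\ref{thm5} (really its proof, via Theorem~\ref{thm4}) on $F_2$. For each valid $\mu$, the coefficient of $\pi/\mu$ in the $\Pi\Sigma$ polynomial $F_2$ is, after deleting from each clause of $F_2$ the variables not in $\pi/\mu$, exactly the number of perfect matchings in an associated bipartite graph, and this count admits an FPRAS by Jerrum--Sinclair--Vigoda \cite{jerrum04}. Running the FPRAS with accuracy parameter $\epsilon$ on each of the $\le a^k$ sub-instances and forming the weighted sum $\sum_\mu c_1(\mu) \cdot \widetilde{c}(F_2, \pi/\mu)$ yields a value $\mathcal{A}(F,\pi,\epsilon)$. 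Because every term $c_1(\mu)\,c(F_2,\pi/\mu)$ is nonnegative, a relative $(1\pm\epsilon)$ approximation on each summand gives a relative $(1\pm\epsilon)$ approximation on the sum — there is no cancellation to worry about — so $(1-\epsilon)c(\pi)\le \mathcal{A}(F,\pi,\epsilon)\le(1+\epsilon)c(\pi)$ with high probability, after a standard union bound over the $a^k$ runs (amplify each run's success probability to $1 - \delta/a^k$ by $O(\log(a^k/\delta))$ repetitions and medians). If all contributions are zero, report ``no''. The total running time is $O(a^k)$ times $\mathrm{poly}(n,1/\epsilon)$, as claimed.

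The main obstacle is bookkeeping rather than anything deep: one must argue carefully that the coefficient $c(F,\pi)$ really does decompose as $\sum_\mu c_1(\mu)\,c(F_2,\pi/\mu)$ over the distinct multilinear monomials $\mu$ of $F_1$ dividing $\pi$, i.e.\ that multiplying out $F_1\cdot F_2$ and then collecting the coefficient of $\pi$ commutes with this ``pick $\mu$ from $F_1$, pick $\pi/\mu$ from $F_2$'' enumeration. This is just the distributive law, but it needs the observation that a monomial of $F_1$ and a monomial of $F_2$ multiply to $\pi$ iff they are complementary divisors of $\pi$ with disjoint variable sets, which is automatic once both are multilinear and their product is the multilinear $\pi$. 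The only quantitative subtlety is ensuring the union bound over $a^k$ FPRAS invocations does not destroy the $\mathrm{poly}$ dependence; since $\log(a^k) = k\log a = O(k)$ and $k \le n$, the amplification cost is absorbed into the $\mathrm{poly}(n,1/\epsilon)$ factor, leaving the overall $O(a^k\,\mathrm{poly}(n,1/\epsilon))$ bound intact.
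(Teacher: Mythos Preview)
Your proposal is correct and follows essentially the same route as the paper: expand the $\Pi_k\Sigma_a\Pi_t$ factor into its $\le a^k$ multilinear monomials, and for each one that divides $\pi$ invoke the FPRAS for the coefficient of the complementary monomial in the $\Pi_m\Sigma_s$ factor, then take the nonnegative weighted sum. The only differences are cosmetic: the paper appeals directly to Theorem~\ref{thm4} (not Theorem~\ref{thm5}) for the per-summand FPRAS, and it omits the union-bound amplification detail you supply, but the argument is the same.
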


\begin{proof}
Let $F=F_1 \cdot F_2$ such that $F_1$ is a $\Pi_k\Sigma_c\Pi_t$
polynomial and $F_2$ is a $\Pi_m\Sigma_s$ polynomial. We first
expand $F_1$ into its sum-product expansion. Since we are only
interested in multilinear monomials, all those that are not
multilinear will be discarded from the expansion. We still use
$F_1$ to denote the resulting expansion. We will have at most
$a^k$ multilinear monomials in $F_1$ as expressed in the following
\begin{eqnarray}\label{exp-thm6}
F_1 = \sum^{a^k}_{i=1} b_i \psi_i,
\end{eqnarray}
where $b_i=c(\psi)$ is the coefficient of the multilinear monomial
$\psi_i$ in $F$.

Given any multilinear monomial $\pi$, we consider how to test
whether $\pi$ is in $F$ and if so, how to find its coefficient
$c(\pi)$. Assume that $\pi$ is a multilinear monomial in $F$.
Since $F = F_1 \cdot F_2$, $\pi$ must be divided into two parts
$\pi = \pi_1 \cdot \pi_2$ such that $\pi_1$ is chosen from $F_1$
and $\pi_2$ is chosen from $F_2$. By expression (\ref{exp-thm6}),
$\pi_1$ must be $\psi_{i_j}$ for some  $1\le i_j \le a^k$. If this
not true, then $\pi$ is not in $F$, so return {\em "no''}. Now,
for each $\psi_{i_j}$ such that $\psi_{i_j}$ is a possible
candidate for $\pi_1$, we decide whether $\pi_2$ is a multilinear
monomial in $F_2$ and if so, we let $\pi_2(\psi_{i_j})$ denote the
second part of $\pi$ with respect to the first part $\pi_1 =
\psi_{i_j}$ and find its coefficient $c(\pi_2(\psi_{i_j}))$ in
$F_2$. By Theorem \ref{thm4}, there is an FPRAS ${\cal A}$ to
accomplish this task, since $F_2$ is a $\Pi_m\Sigma_s$ polynomial.
Let ${\cal A}(\psi_{i_j})$ denote the approximation to the
coefficient $c(\pi_2(\pi_{i_j}))$ returned by the algorithm ${\cal
A}$ with respect to the candidate $\psi_{i_j}$. Let $\psi_{i_1},
\ldots, \psi_{i_\ell}$ be the list of all the candidates for
$\pi_1$. Then, the algorithm ${\cal A}$ returns ${\cal A}(\pi)$ as
$$
{\cal A}(\pi) = b_{i_1}{\cal A}(\psi_{i_1}) + \cdots +
b_{i_\ell}{\cal A}(\psi_{i_\ell}).
$$
Since ${\cal A}$ is an FPRAS, we have
\begin{eqnarray}
{\cal A}(\pi) & \le & b_{i_1}(1+\epsilon)c(\psi_{i_1}\cdot
\pi_2(\psi_{i_1})) + \cdots +
b_{i_\ell}(1+\epsilon)c(\psi_{i_\ell}\cdot \pi_2(\psi_{i_\ell}))
\nonumber \\
&=& (1+\epsilon)[b_{i_1}c(\psi_{i_1}\cdot \pi_2(\psi_{i_1})) +
\cdots + b_{i_\ell}c(\psi_{i_\ell}\cdot \pi_2(\psi_{i_\ell}))]
\nonumber \\
&=& (1+\epsilon)c(\pi). \nonumber
\end{eqnarray}
Similarly, we have
\begin{eqnarray}
{\cal A}(\pi) & \ge & b_{i_1}(1-\epsilon)c(\psi_{i_1}\cdot
\pi_2(\psi_{i_1})) + \cdots +
b_{i_\ell}(1-\epsilon)c(\psi_{i_\ell}\cdot \pi_2(\psi_{i_\ell}))
\nonumber \\
&=& (1-\epsilon)[b_{i_1}c(\psi_{i_1}\cdot \pi_2(\psi_{i_1})) +
\cdots + b_{i_\ell}c(\psi_{i_\ell}\cdot \pi_2(\psi_{i_\ell}))]
\nonumber \\
&=& (1-\epsilon)c(\pi). \nonumber
\end{eqnarray}
Thus, ${\cal A}(\pi)$ is an $\epsilon$-approximation to $c(\pi)$.
The time for expanding $F_1$ is $O(ta^k) = O(na^k)$. The time of
the algorithm ${\cal A}$, by Theorem \ref{thm4}, is
$O(\mbox{poly}(n,1/\epsilon))$. So, the total time of the entire
process is $O(a^k \mbox{poly}(n, 1/\epsilon))$.
\end{proof}

\section{Inapproximability}

Although in the previous section we have proved that there exist
fully polynomial-time randomized approximation schemes for the
problem of computing coefficients of multilinear monomials in
$\Pi_m\Sigma_s$ polynomials, yet in this section we shall show
that this problem is not approximable {\em at all} in polynomial
time for $\Pi_m\Sigma_s\Pi_t$ polynomials with $t\ge 2$, unless
P=NP. Thus, a clear inapproximability boundary arises between
$t=1$ and $t=2$ for $\Pi_m\Sigma_s\Pi_t$ polynomials.

We consider a relaxed setting of approximation in comparison with
the $\epsilon$-approximation in the previous section. Given any
$n$-variate polynomial $F$ and a monomial $\pi$ together with an
approximation factor $\gamma\ge 1$, we say that an algorithm
${\cal A}$ approximates the coefficient $c(\pi)$ in $F$ within an
approximation factor $\gamma$, if it outputs a value ${\cal
A}(F,\pi)$   such that
$$
\frac{1}{\gamma}~c(\pi) \le {\cal A}(F, \pi) \le \gamma~ c(\pi).
$$
We may also refer ${\cal A}$ as a $\gamma$-approximation to
$c(\pi)$.

\begin{theorem}\label{in-thm1}
No matter what approximation factor $\gamma\ge 1$ is used,  there
is no polynomial time approximation algorithm for the problem of
computing the coefficient of any given multilinear monomial in the
sum-product expansion of a $\Pi_m\Sigma_3\Pi_2$ polynomial, unless
P=NP.
\end{theorem}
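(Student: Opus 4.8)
The plan is to reduce from an NP-complete problem, the natural candidate being 3SAT (or a variant such as 3-colorability or independent set), and to build a $\Pi_m\Sigma_3\Pi_2$ polynomial $F$ together with a target multilinear monomial $\pi$ so that $c(F,\pi)$ is positive exactly when the instance is a \textsc{yes}-instance and is $0$ otherwise. Once that is in place the inapproximability is immediate and requires no care about the approximation factor: any $\gamma$-approximation algorithm $\mathcal{A}$, no matter how large $\gamma$ is, must satisfy $\frac{1}{\gamma}c(\pi)\le \mathcal{A}(F,\pi)\le \gamma\, c(\pi)$, so $\mathcal{A}(F,\pi)=0$ iff $c(\pi)=0$; hence a polynomial-time $\gamma$-approximation would decide the NP-complete problem in polynomial time, forcing $\mathrm{P}=\mathrm{NP}$. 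So the entire mathematical content is the construction of a gap instance where the "no" side has coefficient \emph{exactly} zero.

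The key steps I would carry out, in order, are as follows. First, recall from Chen–Fu \cite{chen-fu10} that multilinear monomial testing for $\Pi\Sigma\Pi$ polynomials is already NP-hard even when each clause has at most three terms and each term has degree at most $2$ — that is precisely the $\Pi_m\Sigma_3\Pi_2$ class named in the theorem. So there is a polynomial-time reduction producing, from an NP-complete instance $I$, a $\Pi_m\Sigma_3\Pi_2$ polynomial $F_I$ and a designated multilinear monomial $\pi_I$ such that $\pi_I$ appears in the sum-product expansion of $F_I$ iff $I$ is a \textsc{yes}-instance. Second, observe that "$\pi_I$ appears in the expansion'' means exactly "$c(F_I,\pi_I)\ge 1$'', while "$\pi_I$ does not appear'' means exactly "$c(F_I,\pi_I)=0$''; coefficients of terms are nonnegative integers since all term coefficients in a $\Pi\Sigma\Pi$ polynomial are $1$ (or, more generally, nonnegative). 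Third, feed $F_I$ and $\pi_I$ to the hypothetical $\gamma$-approximation algorithm and read off the sign of its output, as in the paragraph above, to decide $I$ in polynomial time, contradicting $\mathrm{P}\neq\mathrm{NP}$.

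The one place that needs genuine attention — the main obstacle — is making sure the reduction really yields a polynomial whose target coefficient is exactly $0$ in the negative case, rather than merely "small.'' If one were to reduce from, say, a counting or maximization problem naively, cancellations or spurious multilinear monomials could leave a nonzero coefficient even for \textsc{no}-instances, which would destroy the sign test. Using the testing-hardness reduction of \cite{chen-fu10} sidesteps this, because that reduction is designed precisely so that the monomial is present (coefficient $\ge 1$) or absent (coefficient $=0$); one only has to verify that the polynomial it outputs is genuinely $\Pi_m\Sigma_3\Pi_2$ — at most three terms per clause, every term of degree at most $2$ — which it is by the statement of that prior result. If instead one wants a self-contained proof, the cleanest route is to reduce directly from an NP-complete graph problem (e.g. \textsc{$3$-Dimensional Matching} or \textsc{Exact Cover by $3$-Sets}, or a bounded-occurrence SAT) so that each clause naturally has three choices and each choice contributes at most a quadratic term, and then argue that the full product $x_1\cdots x_n$ (say) is multilinear iff the selections form a valid solution — the monomial count on the \textsc{no}-side is then zero because no selection covers all variables without repetition. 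Either way, once the exact-zero gap instance is established, the theorem follows with no further work.
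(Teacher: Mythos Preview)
Your proposal is correct and shares the paper's logical skeleton: invoke the NP-completeness of multilinear-monomial \emph{testing} for $\Pi_m\Sigma_3\Pi_2$ polynomials from \cite{chen-fu10}, and then observe that any multiplicative $\gamma$-approximation, regardless of $\gamma$, distinguishes $c(\pi)=0$ from $c(\pi)\ge 1$, so an approximation oracle would decide the NP-complete problem.

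The one substantive difference is how the two arguments pin down a \emph{specific} target monomial. You assume (or propose to verify) that the Chen--Fu hardness reduction already hands you a designated $\pi_I$, or alternatively you redo a reduction from 3DM/X3C so that the only candidate multilinear monomial is $x_1\cdots x_n$. The paper instead treats the Chen--Fu result as a black box about the \emph{existence} of some multilinear monomial in $F$, and then pads: it forms $F' = F\cdot(x_1+\cdots+x_n)^{n-2m}$, so that the only multilinear monomial $F'$ can possibly contain is $\psi = x_1x_2\cdots x_n$, whence $F$ has any multilinear monomial iff $c(F',\psi)>0$. Your route is more direct and keeps the instance squarely inside $\Pi_m\Sigma_3\Pi_2$ (the paper's padding clause $(x_1+\cdots+x_n)$ has $n$ terms, not $3$, so $F'$ technically leaves the stated class); the paper's route, on the other hand, requires no inspection of the earlier reduction's internals.
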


\begin{proof}
Let $F(x_1, \ldots, x_n) = \prod^m_{i=1} F_i$ be a
$\Pi_m\Sigma_3\Pi_2$ polynomial. With loss of generality, we may
assume that every term $T_{ij}$ in each clause $F_i$ is a product
of two variables. (Otherwise, we can always pad new variables to
any given $\Pi_m\Sigma_3\Pi_2$ polynomial to meet the above {\em
clean} format.) It follows from Chen and Fu \cite{chen-fu10} that
the problem of testing multilinear monomials in this type of
polynomials is NP-complete.

Let $\pi$ be any given multilinear monomial. Obviously, $\pi$ is
in $F$ iff its coefficient $c(\pi)$ in $F$ is bigger than $0$.
Thus, testing whether $\pi$ is in $F$ is equivalent to determine
whether the coefficient of $\pi$ in $F$ is bigger than $0$.

Since every monomial in the expansion of $F$ is a product of
exactly one term from each clause $F_i$, all monomials in $F$ must
have the same degree $2m$. If $2m > n$, then there is no
multilinear monomials in $F$. So we only need to consider the case
of $2m \le n$. Let $H = (x_1 + x_2 + \cdots x_n)$ and define
\begin{eqnarray}\label{in-exp-1}
F' &=& F_1 \cdot F_2 \cdot H^{(n-2m)}
\end{eqnarray}
Then, the only multilinear monomial that $F'$ may possibly have is
$\psi = x_1 x-2 \cdots x_n$. If $\pi$ is a multilinear monomial in
$F$ with the coefficient $c(\pi)>0$, then following a similar
analysis as we did in the proof of Theorem \ref{thm5} we have that
$\pi$ contributes $c(\pi) (n-2m)!$ to the coefficient $c(\psi)$ of
$\psi$ in $F'$. This further implies that $F$ has a multilinear
monomial iff $F'$ has the only multilinear monomial $\psi$ with
its coefficient $c(\psi)= S(F)(n-2m)!$. In other words, $F$ has a
multilinear monomial iff $c(\psi)> 0$ in $F'$.

Assume that there is a polynomial time approximation algorithm
${\cal A}$ to compute, within an approximation factor of
$\gamma\ge 1$, the coefficient of any given  multilinear monomial
in a $\Pi_m\Sigma_3\Pi_2$ polynomial. Apply ${\cal A}$ to $F'$ for
the multilinear monomial $\psi$. Let ${\cal A}(\psi)$ be the
coefficient returned by ${\cal A}$ for $\psi$. Then, we have
\begin{eqnarray}
& & \frac{1}{\gamma}~ c(\psi) \le {\cal A}(\psi) \le \gamma~
c(\psi).  \nonumber
\end{eqnarray}
 This means  that  $F$ have a multilinear monomial iff ${\cal
 A}(\psi)>0$. Hence, we have a polynomial time algorithm for  testing whether $F$ has any
 multilinear monomial via running ${\cal A}$ on $\psi$ in
 $F'$. However,  this is impossible unless P=NP,
 because it has been proved in Chen and Fu \cite{chen-fu10}
 that the multilinear monomial testing problem for $F$ is
 NP-complete.
\end{proof}

By Theorem \ref{thm5}, there is a fully polynomial-time randomized
approximation scheme for the problem of computing the sum of the
coefficients of all the multilinear monomials in a $\Pi_m\Sigma_s$
polynomial. However, when $\Pi_m\Sigma_s\Pi_t$ polynomials are
concerned, even if $s=3$ and $t=2$, this problem becomes
inapproximable at all regardless of the approximation factor.

\begin{theorem}\label{in-thm2}
Assuming $P\not=NP$, given any $n$-variate $\Pi_m\Sigma_3\Pi_2$
polynomial $F$ and any approximation factor $\gamma\ge 1$, there
is no polynomial time approximation algorithm for computing within
a factor of $\gamma$ the sum $S(F)$ of the coefficients of all the
multilinear monomials in the sum-product expansion of $F$.
\end{theorem}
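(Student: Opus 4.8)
The plan is to mirror the proof of Theorem~\ref{in-thm1}, only now using $S(F')$ in place of the coefficient of the single surviving monomial $\psi$. The key observation already established in the proof of Theorem~\ref{in-thm1} is that, after defining $F' = F_1 \cdot F_2 \cdot H^{(n-2m)}$ with $H = (x_1 + \cdots + x_n)$, the polynomial $F'$ is a $\Pi_{n-2m+2}\Sigma_n\Pi_2$ polynomial whose only possible multilinear monomial in its sum-product expansion is $\psi = x_1 x_2 \cdots x_n$; consequently $S(F') = c(F',\psi) = S(F)\,(n-2m)!$. In particular, $S(F') > 0$ if and only if $F$ has a multilinear monomial. (As in Theorem~\ref{in-thm1}, we first dispose of the trivial case $2m > n$, where $F$ has no multilinear monomial at all, and we pad terms so that each $T_{ij}$ is a genuine product of two variables.)

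First I would set up this reduction and note that $F'$ is constructed in polynomial time. Then I would argue by contradiction: suppose there is a polynomial-time algorithm ${\cal A}$ that, on input a $\Pi_m\Sigma_3\Pi_2$ polynomial, returns a value within a factor $\gamma \ge 1$ of the true sum of coefficients of all multilinear monomials. Strictly speaking $F'$ as written has clauses with more than three terms (the $H$ factors have $n$ terms each), so to stay inside the $\Sigma_3$ regime one should either restate the reduction so that $H$ is itself built as a product of $\Sigma_3$ clauses over fresh dummy variables — the same padding trick used throughout — or simply observe, as the paper implicitly does, that the hardness already holds for the broader $\Pi_m\Sigma_s\Pi_2$ class and the $s=3$ bound is inherited by the same clause-splitting device. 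Applying ${\cal A}$ to $F'$ yields ${\cal A}(F')$ with $\frac{1}{\gamma} S(F') \le {\cal A}(F') \le \gamma\, S(F')$, so ${\cal A}(F') > 0$ if and only if $S(F') > 0$ if and only if $F$ has a multilinear monomial.

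This gives a polynomial-time decision procedure for multilinear monomial testing on $\Pi_m\Sigma_3\Pi_2$ polynomials, which is NP-complete by Chen and Fu~\cite{chen-fu10}, contradicting $P \not= NP$ and completing the proof.

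I expect the only real subtlety — the ``main obstacle'' — to be the bookkeeping around keeping the constructed polynomial genuinely within the $\Pi_m\Sigma_3\Pi_2$ family: the raw $H^{n-2m}$ factor violates the $\Sigma_3$ constraint, so one must fold the linear form $H$ into a cascade of degree-respecting $\Sigma_3$ clauses (introducing auxiliary variables and then padding $\pi$/$\psi$ accordingly), and then re-verify that the factorial bookkeeping $S(F') = S(F)\cdot(\text{some positive constant})$ still goes through, so that $S(F')>0 \iff S(F)>0$ is preserved. Everything else is a direct transcription of the argument for Theorem~\ref{in-thm1} with ``$c(\psi)$'' replaced by ``$S(F')$''.
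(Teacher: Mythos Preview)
Your proposal is correct and follows essentially the same route as the paper: construct $F' = F\cdot H^{n-2m}$, observe that its only possible multilinear monomial is $\psi = x_1\cdots x_n$ so that $S(F') = c(F',\psi) = S(F)(n-2m)!$, and conclude that any factor-$\gamma$ approximation to $S(\cdot)$ decides multilinear-monomial existence for $\Pi_m\Sigma_3\Pi_2$ polynomials, contradicting the NP-completeness result of Chen and Fu. You are in fact more careful than the paper on one point---the paper silently ignores that the $H$-clauses violate the $\Sigma_3$ bound, whereas you flag it and sketch the standard clause-splitting fix---and it is worth noting (for both your write-up and the paper's) that the detour through $F'$ is not strictly needed here: since $S(F)>0$ iff $F$ has a multilinear monomial, one can apply the hypothetical approximation directly to $F$ itself and reach the same contradiction.
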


\begin{proof}
Consider the same $n$-variate $\Pi_m\Sigma_3\Pi_2$ polynomial
$F(x_1, x_2, \ldots, x_n)$ as in the proof of Theorem
\ref{in-thm1}. Define $F'$ as in expression (\ref{in-exp-1}). With
a similar analysis, we have that $F$ has multilinear monomials iff
the coefficient of the multilinear monomial $\psi = x_1 x_2 \cdots
x_n$ has the coefficient $S(F)~(n-2m)!$. That is, $F$ has
multilinear monomials iff the coefficient $c(\psi)$ of $\psi$ is
bigger than zero in $F'$. Hence, like the analysis for Theorem
\ref{in-thm1}, any polynomial time approximation algorithm for
computing the coefficient $c(\psi)$ in $F'$ can be naturally
adopted as a polynomial time algorithm for the multilinear
monomial testing problem for $\Pi_m\Sigma_3\Pi_2$ polynomials.
Since the latter problem is NP-complete (see Chen and Fu
\cite{chen-fu10}), the former algorithm does not exists unless P =
NP.
\end{proof}

\section{Weak Inapproximability}

In this section, we shall relax the $\gamma$-approximation further
in a much weak setting. Here, we allow the computed value to be
within a factor of the targeted value along with some additive
adjustment. Weak approximation has been first considered in our
previous work on approximating the exemplar breakpoint distance
\cite{chen06} and the exemplar conserved interval distance
\cite{chen08} between two genomes. Assuming $P\not=NP$, it has
been shown  that the first problem does not admit any factor
approximation along with a linear additive adjustment
\cite{chen06}, while the latter has no approximation within any
factor along with a $O(n^{1.5})$ additive adjustment
\cite{chen08}. We shall strengthen the inapproximability results
of Theorems \ref{in-thm1} and \ref{in-thm2} to weak
inapproximability for computing the coefficient of any given
multilinear monomial in a $\Pi\Sigma\Pi$ polynomials. But first
let us define the weak approximation.

\begin{definition}
Let $Z$ be the set of all nonnegative integers. Given four
functions $f(x), h(x), \alpha(x)$ and $\beta(x)$ from $Z$ to $Z$
with $\alpha(x) \ge 1$, we say that $h(x)$ is a {\em weak
$(\alpha(x), \beta(x))$-approximation} to $f(x)$, if
\begin{eqnarray}\label{w-exp-d}
\mbox{max}\left\{0, \frac{f(x) - \beta(x)} {\alpha(x)}\right\} \le
h(x)\le \alpha(x)~f(x)+ \beta(x).
\end{eqnarray}
\end{definition}

\begin{theorem}\label{w-thm1}
Let  $\alpha(x)\ge 1$ and $\beta(x)$ be any two  polynomial time
computable functions from $Z$ to $Z$. There is no polynomial time
weak $(\alpha(x), \beta(x))$-approximation algorithm for computing
the coefficient of any given multilinear monomial in an
$n$-variate $\Pi_m\Sigma_3\Pi_2$ polynomial, unless P=NP.
\end{theorem}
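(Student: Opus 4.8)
The plan is to amplify the exact-coefficient hardness from Theorem~\ref{in-thm1} so that the gap between ``coefficient $=0$'' and ``coefficient $\ge 1$'' is too large for any weak $(\alpha,\beta)$-approximation to straddle. The starting observation is the same one used in Theorems~\ref{in-thm1} and \ref{in-thm2}: given a $\Pi_m\Sigma_3\Pi_2$ polynomial $F$ in the clean format where every term is a product of two variables, the multilinear monomial testing problem for $F$ is NP-complete by Chen and Fu~\cite{chen-fu10}, and via the padding $F' = F_1\cdot F_2\cdot H^{(n-2m)}$ with $H=(x_1+\cdots+x_n)$ we get a single target monomial $\psi = x_1 x_2\cdots x_n$ whose coefficient in $F'$ is exactly $S(F)\,(n-2m)!$, which is $>0$ iff $F$ has a multilinear monomial. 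So deciding ``$c(\psi)>0$ in $F'$'' is NP-hard.

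The key new ingredient is a self-amplification (or ``powering'') step. Given the instance $F$ on $n$ variables, I would build a new polynomial on a disjoint union of $N$ fresh copies of the variable set, namely $G = F^{(1)}\cdot F^{(2)}\cdots F^{(N)}$, where $F^{(j)}$ is $F$ rewritten on the $j$-th private copy $x_1^{(j)},\dots,x_n^{(j)}$ of the variables. Since the copies are variable-disjoint, $G$ is still a $\Pi_{Nm}\Sigma_3\Pi_2$ polynomial, its clauses still have $\le 3$ terms of degree $\le 2$, and the coefficient of the ``big'' multilinear monomial $\Psi = \prod_{j=1}^{N}\prod_{i=1}^{n} x_i^{(j)}$ in the padded version $G' = \big(F^{(1)}_1 F^{(1)}_2\cdots\big)\cdot H_{\mathrm{big}}^{(\text{suitable power})}$ factors as a product over copies, so $c(G',\Psi) = \big(c(F',\psi)\big)^{N}\cdot(\text{fixed factorial})$. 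Hence if $F$ has a multilinear monomial then $c(G',\Psi)\ge c^{N}$ for the constant $c\ge 1$ coming from $(n-2m)!$ (in fact at least $2^{N}$ after folding the factorial in, or one can normalize so the per-copy coefficient is $\ge 2$), while if $F$ has none then $c(G',\Psi)=0$. Choosing $N = N(n)$ to be a polynomial large enough that $c^{N} > \alpha(n)\cdot 0 + \beta(n)$ — which is possible because $\alpha,\beta$ are polynomial-time computable, hence bounded by a fixed polynomial in the size of the instance, while $c^{N}$ grows exponentially in $N$ and $N$ is chosen after inspecting $\alpha,\beta$ — forces the two cases to lie on opposite sides of the weak-approximation window: in the NP-hard ``yes'' case the lower bound $\max\{0,(c(G',\Psi)-\beta)/\alpha\}$ is strictly positive, so any weak $(\alpha,\beta)$-approximation must output a positive value, whereas in the ``no'' case $c(G',\Psi)=0$ and the upper bound $\alpha\cdot 0+\beta=\beta\ge 0$ allows (indeed forces, once we also note the output is an integer and could legitimately be $0$) the algorithm to be distinguished. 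Thus a polynomial-time weak $(\alpha,\beta)$-approximation would decide multilinear monomial testing for $\Pi_m\Sigma_3\Pi_2$ polynomials in polynomial time, contradicting NP-completeness.

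Concretely the steps are: (1) recall the clean-format NP-completeness and the $F\mapsto F'$, $\psi$ reduction from Theorem~\ref{in-thm1}, arranging (by an extra dummy clause or by the factorial factor) that the ``yes''-case coefficient of $\psi$ is at least some constant $c\ge 2$; (2) take $N$ disjoint variable-renamed copies and multiply, observing the coefficient of the product monomial multiplies across copies, so the ``yes''-case value is $\ge c^{N}$ and the ``no''-case value is $0$; (3) bound $\alpha(n),\beta(n)$ by a polynomial $q(n)$ (legitimate since they are polynomial-time computable), and pick $N = N(n)=\lceil \log_{c}(q(n)+1)\rceil$, which is polynomially bounded, so that $c^{N} \ge q(n)+1 > \alpha(n)\cdot 0+\beta(n)$; (4) run the hypothetical weak approximation on $G'$ with target $\Psi$ and accept iff the output is positive, and verify correctness of this decision procedure from inequality~(\ref{w-exp-d}); (5) conclude P~$=$~NP.

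The main obstacle is the bookkeeping around the additive term $\beta(x)$ in the ``no'' case: the weak approximation is only constrained to lie in $[0,\beta]$ there, so we cannot test ``output $>0$'' naively if $\beta$ could be positive and the algorithm could legitimately emit a positive integer even on a ``no'' instance. The fix is to make the amplification gap one-sided in the way that matters — we only need the \emph{lower} endpoint $(c(G',\Psi)-\beta)/\alpha$ to exceed $0$ in the ``yes'' case and the value to be exactly $0$ in the ``no'' case is not directly usable; instead, one should amplify so that the ``yes''-case value exceeds $\alpha(n)\cdot\beta(n)+\beta(n)$ as well, and then use a comparison threshold of the form ``output $> \beta(n)$'': in a ``no'' instance any weak $(\alpha,\beta)$-approximation is $\le \alpha\cdot 0+\beta=\beta$, while in a ``yes'' instance it is $\ge (c^{N}-\beta)/\alpha > \beta$ once $c^{N} > \alpha\beta+\beta$. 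Since $\alpha\beta+\beta$ is still polynomially bounded and $c^{N}$ is exponential in the polynomially-chosen $N$, this separation is achievable, and the decision procedure ``accept iff output $>\beta(n)$'' is then provably correct, closing the argument.
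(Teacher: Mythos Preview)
Your proposal is correct in outline and reaches the same conclusion, but your amplification device differs from the paper's. You boost the target coefficient by taking an $N$-fold tensor power (a product of $N$ variable-disjoint copies of $F'$), so that the ``yes''-case coefficient becomes an $N$-th power of the original and can be driven above any prescribed $\alpha\beta+\beta$. The paper instead introduces $k$ fresh variables $y_1,\dots,y_k$ and multiplies the padded polynomial by $(y_1+\cdots+y_k)^k$: the only possible multilinear monomial then becomes $\psi = x_1\cdots x_n\,y_1\cdots y_k$, and its coefficient picks up a factor of exactly $k!$, giving $c(\psi)=S(F)\,(n-2m)!\,k!$. One then chooses $k$ so that $k! > 2\alpha(n+k)\beta(n+k)+\beta(n+k)$ and uses the threshold ``$\mathcal{A}(\psi)>2\beta(n+k)$ versus $\mathcal{A}(\psi)\le\beta(n+k)$'' to decide.

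The paper's gadget is a little tidier: the $k!$ factor appears automatically, so there is no need for your preliminary step of forcing the per-copy coefficient to be at least $2$, and no bookkeeping about whether to pad before or after taking copies or how the per-copy factorials combine. Your tensor-power route is nonetheless a standard and valid alternative, and would be the natural choice in settings where a direct multiplicative gadget like $(y_1+\cdots+y_k)^k$ is not available. One point to tighten in your write-up: ``polynomial-time computable'' does not literally imply ``bounded by a fixed polynomial'' (such a function on input $n$ can output values up to $2^{\mathrm{poly}(n)}$), and the parameters $\alpha,\beta$ must be evaluated at the size $Nn$ of the amplified instance rather than at $n$; the paper is equally brisk on this issue, so it is not a flaw peculiar to your argument, but it deserves care when you write it out.
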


\begin{proof}
Let $F(x_1, \ldots, x_n) = \prod^m_{i=1} F_i$ be a
$\Pi_m\Sigma_3\Pi_2$ polynomial. Like in the proof of Theorem
\ref{in-thm1}, we assume without loss of generality that  every
term in each clause $F_i$ is a product of two variables. We
further assume that $2m >n$, because otherwise there are no
multilinear monomials in $F$.

Choose $k$ such that $k! > 2 \alpha(n+k)\beta(n+k) + \beta(n+k)$.
Notice that finding such a $k\le 2n$ is possible when $n$ is large
enough,  because both $\alpha$ and $\beta$ are polynomial time
computable. Let $H = (x_1 + x_2 + \cdots x_n)$ and $G = (y_1 + y_2
+ \cdots y_k)$ with $y_i$ being new variables. Define
\begin{eqnarray}\label{w-exp-1}
F' &=& F\cdot H^{n-2m} \cdot G^k = F_1 \cdots F_m \cdot H^{n-2m}
\cdot H^k.
\end{eqnarray}
It is easy to see from the above expression (\ref{w-exp-1}) that
$F$ has a multilinear monomial iff $F'$ has one. Furthermore, the
only multilinear monomial that $F'$ can possibly have is $\psi =
x_1\cdots x_n \cdot y_1\cdots y_k$.

Now consider that $F$ has a multilinear monomial $\pi$ with its
coefficient $c(\pi)>0$. Since the degree of $\pi$ is $2m$, let
$x_{i_1}, \ldots, x_{i_{n-2m}}$ be the variables that are not
included in $\pi$. Then, the concatenation of $\pi$ with each
permutation of $x_{i_1}, \ldots, x_{i_{n-2m}}$ selected from
$H^{n-2m}$ and each permutation of $y_1, \ldots, y_k$ chosen from
$G^k$ will constitute a copy of the only multilinear monomial
$\psi$ in $F'$. Thus, $\pi$ contributes $c(\pi)(n-2m)!~k!$ to the
coefficient $c(\psi)$ of $\psi$ in $F'$. When all the possible
multilinear monomials in $F$ are considered, the coefficient of
$c(\psi)$ in $F'$ is $S(F)(n-2m)! k!$. If $F'$ has a multilinear
monomial, i.e., the only one $\psi$, then $F$ has at least one
multilinear monomial. In this case, the above analysis also yields
$c(\psi) = S(F) (n-2m)! k!$ in $F'$.

Assume that there is a polynomial time weak $(\alpha,
\beta)$-approximation algorithm ${\cal A}$ to compute the
coefficient of any given the multilinear monomial in a
$\Pi_m\Sigma_3\Pi_2$ polynomial. Apply ${\cal A}$ to $F'$ for the
multilinear monomial $\psi$. Let ${\cal A}(\psi)$ be the
coefficient returned by ${\cal A}$ for $\psi$. Then, by expression
(\ref{w-exp-d}) we have
\begin{eqnarray}
{\cal A}(\psi) &\le & \alpha(n+k)~c(\psi) + \beta(n+k) \nonumber \\
\label{w-exp-2}
&=& \alpha(n+k)~S(F)~(n-2m)!~k! + \beta(n+k), \\
{\cal A}(\psi) &\ge & \frac{c(\psi) -\beta(n+k)}{\alpha(n+k)} \nonumber \\
\label{w-exp-3}
 & = & \frac{S(F)~(n-2m)!~k! - \beta(n+k)}{\alpha(n+k)}.
\end{eqnarray}
When $F$ does not have any multilinear monomials, then $F'$ does
not either, implying $S(F) = 0$. In this case, by the relation
(\ref{w-exp-2}), we have
\begin{eqnarray} \label{w-exp-4}
{\cal A}(\psi) & \le & \beta(n+k).
\end{eqnarray}
When $F$ has multilinear monomials, then $F'$ does as well. By the
relation (\ref{w-exp-3}), we have
\begin{eqnarray}\label{w-exp-5}
 {\cal A}(\psi) &\ge &
\frac{S(F)~(n-2m)!~k! - \beta(n+k)}{\alpha(n+k)}
\nonumber\\
&\ge& \frac{k! - \beta(n+k)}{\alpha(n+k)} >
\frac{(2\alpha(n+k)\beta(n+k) + \beta(n+k))-
\beta(n+k)}{\alpha(n+k)} \nonumber\\
&=& 2\beta(n+k).
\end{eqnarray}
Since there is a clear gap between $(-\infty, \beta(n+k)]$ and
$(2\beta(n+k), +\infty)$, inequalities (\ref{w-exp-4}) and
(\ref{w-exp-5}) provide us with a sure way to test whether $F$ has
a multilinear monomial or not:  If  ${\cal A}(\psi) >
2\beta(n+k)$, then $F$ has multilinear monomials. If ${\cal
A}(\psi) \le \beta(n+k)$ then $F$ does not. Since ${\cal A}$ runs
in polynomial time, $\beta(n+k)$ is polynomial time computable and
$k\le 2n$, this implies that one can test whether $F$ has a
multilinear monomial in polynomial time. Since it has been proved
in Chen and Fu \cite{chen-fu10} that the problem of testing
multilinear monomials a $\Pi_m\Sigma_3\Pi_2$ polynomial is
NP-complete, such an algorithm ${\cal A}$ does not exist unless
P=NP.
\end{proof}

Combining the analysis for proving Theorems \ref{in-thm2} and
\ref{w-thm1}, we have the following weak inapproximability for
computing the sum of coefficients of all the multilinear monomials
in a $\Pi\Sigma\Pi$ polynomial.

\begin{theorem}\label{w-thm2}
Let  $\alpha(x)\ge 1$ and $\beta(x)$ be any two  polynomial time
computable functions from $Z$ to $Z$. Assuming $P\not= NP$, there
is no polynomial time weak $(\alpha(x), \beta(x))$-approximation
algorithm for computing the sum $S(F)$ of the coefficients of all
the multilinear monomials in the sum-product expansion of a
$\Pi_m\Sigma_3\Pi_2$ polynomial $F$.
\end{theorem}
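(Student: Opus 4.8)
The plan is to combine the padding construction already used in the proof of Theorem~\ref{w-thm1} with the observation from Theorem~\ref{in-thm2} that the coefficient $c(\psi)$ of the unique candidate monomial $\psi$ in the padded polynomial equals $S(F)\cdot(n-2m)!\cdot k!$. Since a weak $(\alpha,\beta)$-approximation to $c(\psi)$ is, up to the known factorial factor, a weak approximation to $S(F)$, and since the factorial factor is chosen large enough to dominate the additive slack $\beta$, such an approximation lets us distinguish $S(F)=0$ from $S(F)\ge 1$, i.e.\ it solves the NP-complete multilinear monomial testing problem for $\Pi_m\Sigma_3\Pi_2$ polynomials.

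Concretely, I would start exactly as in Theorem~\ref{w-thm1}: take a $\Pi_m\Sigma_3\Pi_2$ polynomial $F(x_1,\dots,x_n)=\prod_{i=1}^m F_i$ in the clean format where every term is a product of two variables, assume $2m\ge n$ (else there is trivially no multilinear monomial), and choose $k\le 2n$ with $k! > 2\alpha(n+k)\beta(n+k)+\beta(n+k)$, which is possible for large $n$ because $\alpha,\beta$ are polynomial-time computable. Form $F' = F\cdot H^{n-2m}\cdot G^k$ with $H=(x_1+\cdots+x_n)$ and $G=(y_1+\cdots+y_k)$ for fresh variables $y_j$. The only multilinear monomial $F'$ can contain is $\psi=x_1\cdots x_n\cdot y_1\cdots y_k$, and by the counting argument already carried out in the proof of Theorem~\ref{w-thm1} its coefficient in $F'$ is exactly $c(\psi)=S(F)\,(n-2m)!\,k!$. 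Then I would invoke the hypothetical polynomial-time weak $(\alpha,\beta)$-approximation algorithm ${\cal A}$ for the sum of coefficients of all multilinear monomials, applied to $F'$. Note $F'$ has $n+k$ variables, so ${\cal A}$ returns a value ${\cal A}(F')$ with $\max\{0,(c(\psi)-\beta(n+k))/\alpha(n+k)\}\le {\cal A}(F')\le \alpha(n+k)\,c(\psi)+\beta(n+k)$; here I use that $S(F')=c(\psi)$ since $\psi$ is the unique multilinear monomial of $F'$.

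The decision procedure is then the gap argument: if $S(F)=0$ then $c(\psi)=0$ and ${\cal A}(F')\le \beta(n+k)$, whereas if $S(F)\ge 1$ then $c(\psi)\ge (n-2m)!\,k!\ge k!$ and ${\cal A}(F')\ge (k!-\beta(n+k))/\alpha(n+k) > 2\beta(n+k)$ by the choice of $k$. Since $\beta(n+k)$ is polynomial-time computable and the intervals $(-\infty,\beta(n+k)]$ and $(2\beta(n+k),+\infty)$ are disjoint, comparing ${\cal A}(F')$ against $2\beta(n+k)$ decides in polynomial time whether $F$ has a multilinear monomial, which is NP-complete by Chen and Fu~\cite{chen-fu10}; hence no such ${\cal A}$ exists unless $P=NP$.

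I do not anticipate a serious obstacle: every ingredient — the padding gadget, the coefficient identity $c(\psi)=S(F)(n-2m)!k!$, the choice of $k$, and the gap — is already present in the proofs of Theorems~\ref{in-thm2} and~\ref{w-thm1}, so the only real content is recombining them. The one point requiring a line of care is confirming that $\psi$ is genuinely the \emph{only} multilinear monomial of $F'$, so that $S(F')=c(\psi)$ and the weak approximation guarantee for the sum applies directly to a single coefficient; this follows because every monomial of $F'$ has degree $2m+(n-2m)+k=n+k$ and uses all $n+k$ variables, and among those the multilinear ones must be $x_1\cdots x_n y_1\cdots y_k$. A minor bookkeeping detail is ensuring $k\le 2n$ so that the reduction is genuinely polynomial, which holds for $n$ large enough by the polynomial-time computability of $\alpha$ and $\beta$, exactly as argued in Theorem~\ref{w-thm1}.
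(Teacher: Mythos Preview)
Your proposal is correct and follows exactly the approach the paper indicates, which is simply to combine the padding construction and gap argument of Theorem~\ref{w-thm1} with the observation (as in Theorem~\ref{in-thm2}) that $S(F')=c(\psi)=S(F)\,(n-2m)!\,k!$ because $\psi$ is the only possible multilinear monomial of $F'$. Your explicit verification that $S(F')=c(\psi)$ is the one detail the paper leaves implicit, and it is precisely what lets the hypothetical weak approximation for the \emph{sum} $S(\cdot)$ play the same role as the weak approximation for a \emph{single coefficient} in Theorem~\ref{w-thm1}.
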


\section{The Maximum Multilinear Problem and Its
Approximation}\label{max}

Given any $\Pi\Sigma\Pi$ polynomial $F(x_1, \ldots, x_n) = F_1
\cdots F_m$, $F$ may not have any multilinear monomial in its
sum-product expansion. But even if this is the case, one can
surely find a multilinear monomial by selecting terms from a
proper subset of the clauses in $F$, unless all the terms in $F$
are not multilinear or $F$ is simply empty. In this section, we
consider the problem of finding the largest (or longest)
multilinear monomials from  subsets of the clauses in $F$. We
shall investigate the complexity of approximating this problem.

\begin{definition}
Let $F(x_1, \ldots, x_n)=F_1\cdots F_m$ be a $\Pi_m\Sigma_s\Pi_t$
polynomial.  Define $\mbox{MAX-SIZE}(F)$ as the maximum length of
multilinear monomials $\pi = pi_{i_1} \cdots \pi_{i_k}$ with
$\pi_{i_j}$ in  $F_{i_j}$, $1\le j\le k$ and  $1\le i_1 < \cdots <
i_k.$ Let $\mbox{MAX-MLM}(F)$ to be a multilinear monomial $\pi$
such that $|\pi| = \mbox{MX-SIZE}(F)$, and we call such a
multilinear monomial as a $\mbox{MAX}$-multilinear monomial in
$F$.
\end{definition}

The MAX-MLM problem for an $n$-variate  $\Pi\Sigma\Pi$ polynomial
$F$ is to find $\mbox{MAX-MLM}(F)$. Sometimes, we also refer the
MAX-MLM problem as the problem of finding MAX-SIZE(F). We say that
an algorithm ${\cal A}$ is an approximation scheme within a factor
$\gamma\ge 1$ for the MAX-MLM problem if, when given any
$\Pi\Sigma\Pi$ polynomial $F$, ${\cal A}$ outputs a multilinear
monomial denoted as  ${\cal A}(F)$ such that
$\mbox{MAX-SIZE}(F)\le \gamma |{\cal A}(F)|$.

\begin{theorem}\label{max-thm1}
Let $\lambda\ge 2$ be a constant integer. Let $F$ be any given
 $n$-variate $\Pi_m\Sigma_s\Pi_\lambda$ polynomial with $s\ge 2$.
 There is a polynomial time approximation algorithm that approximates the MAX-MLM problem
 for $F$ within a  factor of $\lambda$.
\end{theorem}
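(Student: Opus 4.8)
The plan is to reduce the MAX-MLM problem for a $\Pi_m\Sigma_s\Pi_\lambda$ polynomial to a covering/packing problem that can be solved exactly in polynomial time, and then show that the exact solution of that relaxed problem overestimates $\mathrm{MAX\text{-}SIZE}(F)$ by at most a factor $\lambda$. The key observation is that a multilinear monomial $\pi = \pi_{i_1}\cdots\pi_{i_k}$ is obtained by choosing one term $\pi_{i_j}$ from each selected clause $F_{i_j}$ such that the variable sets of the chosen terms are pairwise disjoint; its length $|\pi|$ is then $\sum_j \deg(\pi_{i_j})$, which is the total number of distinct variables used. So $\mathrm{MAX\text{-}SIZE}(F)$ is exactly the maximum, over all ways of picking at most one term per clause with pairwise-disjoint variable sets, of the number of variables covered.

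First I would relax the disjointness requirement: instead of insisting the chosen terms be pairwise disjoint, I only require that the chosen terms, viewed as a subfamily of subsets of $\{x_1,\dots,x_n\}$ each of size at most $\lambda$, have the property that \emph{each clause contributes at most one term}. Call $\mathrm{OPT}^\ast$ the maximum number of variables that appear in the union of such a family. Clearly $\mathrm{OPT}^\ast \ge \mathrm{MAX\text{-}SIZE}(F)$, since any legal multilinear selection is in particular such a family. The point is that $\mathrm{OPT}^\ast$ is easy: to maximize the size of the union subject to ``at most one term per clause,'' for each clause $F_i$ just pick the term of largest degree (ties broken arbitrarily), take the union $U$ of these $m$ chosen terms, and set $\mathrm{OPT}^\ast = |U|$; this is computable in polynomial time since $s$ need not even be constant and $\lambda$ is a constant.

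Next I would turn the relaxed solution into an actual multilinear monomial losing at most a factor $\lambda$. Starting from the chosen one-term-per-clause family $\{T_1,\dots,T_m\}$ with $|T_i|\le \lambda$, process the clauses greedily: maintain a set $S$ of ``used'' variables, and for each $i$ in turn, if $T_i$ is disjoint from $S$, accept $T_i$ (add its variables to $S$ and record $\pi_i = $ the product of $T_i$'s variables), otherwise skip clause $i$. The resulting product of accepted terms is a genuine multilinear monomial $\mathcal{A}(F)$, since by construction the accepted terms are pairwise disjoint. For the ratio: every variable in $U = \bigcup_i T_i$ lies in some $T_i$; when clause $i$ was processed, either $T_i$ was accepted (so its variables are in $S$ at the end) or $T_i$ intersected the then-current $S$ in some already-used variable $v$, and $v$ belongs to some accepted term. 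Charging each variable of $U$ to an accepted term that ``blocked'' it (or to itself if accepted) and noting each accepted term has at most $\lambda$ variables and hence can be charged at most $\lambda$ times (once per variable it contains, across all blocked clauses it can only block via one of its own $\le\lambda$ variables — here I use that distinct clauses blocked by the same accepted term $T$ each collide with $T$ on one of $T$'s variables), we get $|U| \le \lambda\,|\mathcal{A}(F)|$. Combining, $\mathrm{MAX\text{-}SIZE}(F) \le \mathrm{OPT}^\ast = |U| \le \lambda\,|\mathcal{A}(F)|$, which is the desired $\lambda$-approximation, and the whole procedure is clearly polynomial time.

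The step I expect to be the main obstacle is the charging argument bounding $|U|$ by $\lambda\,|\mathcal{A}(F)|$: one has to be careful about how many distinct clauses a single accepted term can block and make sure the bookkeeping genuinely yields the factor $\lambda$ and not something larger (e.g.\ $s$). The clean way is to charge each \emph{variable} of $U$ (not each skipped clause) to an accepted term containing it, so that an accepted term $T$ absorbs at most $|T|\le\lambda$ charges; verifying that every variable of $U$ does get absorbed — i.e.\ that no variable of any $T_i$ survives unaccepted — is exactly the invariant maintained by the greedy scan, and that is the crux of the correctness proof.
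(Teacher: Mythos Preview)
Your proposal has a genuine gap: the algorithm you describe does not achieve a $\lambda$-approximation, and both the inequality $\mathrm{MAX\text{-}SIZE}(F)\le |U|$ and the charging bound $|U|\le\lambda\,|\mathcal{A}(F)|$ can fail.

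For the first inequality, take $\lambda=2$ and two clauses $F_1=x_1x_2+x_3$, $F_2=x_1x_2+x_4$. Your rule picks the largest-degree term in each clause, so $T_1=T_2=x_1x_2$ and $U=\{x_1,x_2\}$, $|U|=2$. But $x_1x_2\cdot x_4$ is a multilinear monomial of length $3$, so $\mathrm{MAX\text{-}SIZE}(F)=3>|U|$. The quantity you actually need, $\mathrm{OPT}^\ast$, is a maximum-coverage value and is not obtained by taking the largest term per clause.

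For the charging argument and the overall ratio, take $\lambda=2$ and four clauses $F_i=x_1x_{i+1}+y_i$ for $i=1,\dots,4$. Your preselected terms are $T_i=x_1x_{i+1}$; the greedy scan accepts $T_1$ and skips $T_2,T_3,T_4$ since each shares $x_1$ with $S$. Thus $|\mathcal{A}(F)|=2$. However $x_1x_2\cdot y_2\cdot y_3\cdot y_4$ is multilinear of length $5$, so $\mathrm{MAX\text{-}SIZE}(F)\ge 5>2\cdot 2=\lambda\,|\mathcal{A}(F)|$. The ``invariant'' you invoke, that every variable of $U$ ends up in some accepted term, is simply false: $x_3,x_4,x_5\in U$ never enter $S$. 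A single accepted term can block unboundedly many clauses through one shared variable, so charging per variable of the accepted term does not control $|U|$.

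The paper's argument avoids this by not pre-committing to one term per clause. Its greedy step, at each stage, searches \emph{all} terms of \emph{all} unmarked clauses for the longest one disjoint from the current partial monomial. In the four-clause example it would pick $x_1x_2$, then $y_2$, then $y_3$, then $y_4$, recovering length $5$. The analysis then compares the $i$-th greedy pick directly against the $(\lambda(i-1)+1)$-th largest term of the optimum, using that each greedy pick can intersect at most $\lambda$ of the optimum's (pairwise disjoint) terms. Your relaxation throws away the alternative terms too early for any such comparison to go through.
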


\begin{proof}
Let $F(x_1, \ldots, x_n) = F_1\cdots F_m$ such that each clause
$F_i$ has at most $s$ terms with degrees at most $\lambda$. Let $M
= M_1 \cdot M_2 \cdots M_k$ be a MAX-multilinear monomial in $F$.
Without loss of generality, assume $|M_1| \ge |M_2| \ge \cdots
|M_k|$. We shall devise a simple greedy strategy to find a
multilinear monomial $\pi$ to approximate $M$.

We first find the longest term $\pi_1$ from a clause $F_{i_1}$.
Mark the clause $F_{i_1}$ off in $F$. Let $\pi = \pi_1$.  From all
the unmarked clauses in $F$, find the longest term $\pi_2$ from a
clause $F_{i_2}$ such that $\pi_2$ has no common variables in
$\pi$. Mark $F_{i_2}$ off and let $\pi = \pi_1 \cdot \pi_2$.
Repeat this process until no more terms can be found. At this
point, we obtain a multilinear monomial $\pi = \pi_1 \cdot \pi_2
\cdots \pi_\ell$.

Notice that each term in $F$ has at most $\lambda$ variables. Each
$\pi_i$ may share certain common variables with some terms in $M$.
If this is the case, then $\pi_i$ will share common variables with
at most $\lambda$ terms in $M$. This means that we can select at
least $\ell \ge \lceil\frac{k}{\lambda}\rceil$ terms for $\pi$.
The greedy strategy implies that
\begin{eqnarray}
&&\left|\pi_i\right| \ge \left|M_{\lambda(i-1)+1}\right| \ge
\frac{\left|M_{\lambda(i-1)+1}\right|+ \cdots
+\left|M_{\lambda(i-1)+\lambda}\right|}{\lambda},~ 1\le i\le
\left\lfloor\frac{k}{\lambda}\right\rfloor,
\nonumber\\
&&\left|\pi_{\lceil \frac{k}{\lambda}\rceil}\right| \ge
\left|M_{\lambda\lfloor\frac{k}{\lambda}\rfloor+1}\right| \ge
\frac{\left|M_{\lambda\lfloor\frac{k}{\lambda}\rfloor+1}\right|+\cdots+\left|M_{k}\right|}{\lambda},~~
\mbox{if}~ \left\lfloor\frac{k}{\lambda}\right\rfloor =
\left\lceil\frac{k}{\lambda}\right\rceil - 1. \nonumber
\end{eqnarray}
Thus,
\begin{eqnarray}
\left|\pi\right| &\ge & \left|\pi_1\right| + \cdots +
\left|\pi_{\left\lceil\frac{k}{\lambda}\right\rceil}\right| \nonumber \\
 & \ge & \frac{\left|M_1\right| + \cdots + \left|M_k\right|}{\lambda}  =
 \frac{\left|M\right|}{\lambda}. \nonumber
\end{eqnarray}
Hence,
\begin{eqnarray}
\mbox{MAX-SIZE}(F) = \left|M\right| \le \lambda \left|\pi\right|.
\nonumber
\end{eqnarray}
Therefore, The greedy strategy finds the monomial $\pi$ that
approximates the MAX-multilinear monomial $M$ within the factor
$\lambda$.
\end{proof}

\begin{theorem}\label{lowerbound-thm1}
Let $F(x_1, \ldots, x_n)$ be any given $n$-variate
$\Pi_m\Sigma_s\Pi_t$ polynomial.  Unless P = NP, there can be no
polynomial time algorithm that approximates $\mbox{MAX-MLM}(F)$
within a factor of $n^{(1-\epsilon)/2}$, for any $\epsilon >0.$
\end{theorem}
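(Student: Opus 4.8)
The plan is to reduce from the Maximum Independent Set problem, exploiting the known strong inapproximability of MAX-CLIQUE/MAX-INDEPENDENT-SET due to Håstad: unless $P = NP$, no polynomial time algorithm approximates the maximum independent set in an $N$-vertex graph within a factor of $N^{1-\epsilon}$ for any $\epsilon > 0$. The idea is that the MAX-MLM problem on a $\Pi\Sigma\Pi$ polynomial naturally encodes a conflict/packing structure: each clause contributes one term, terms ``conflict'' when they share a variable, and we want the largest conflict-free collection. First I would take any graph $G = (V, E)$ with $V = \{v_1, \ldots, v_N\}$ and build a $\Pi_N\Sigma_s\Pi_t$ polynomial $F_G = F_1 \cdots F_N$ with one clause $F_i$ per vertex $v_i$. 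For each edge $e = (v_i, v_j)$ introduce a dedicated variable $z_e$, and let $F_i$ be a single term equal to the product $x_i \prod_{e \ni v_i} z_e$, where $x_i$ is a private variable for $v_i$. Then two clauses $F_i$ and $F_j$ can both be used in forming a multilinear monomial iff $v_i$ and $v_j$ share no variable, i.e. iff $(v_i, v_j) \notin E$; a selection of clauses yields a multilinear monomial exactly when the corresponding vertex set is an independent set, and the \emph{length} of that monomial, counting only the private variables $x_i$, is the size of the independent set (the edge-variables only add conflicts, not useful length, so one has to normalize the length contribution carefully). Hence $\mathrm{MAX\text{-}SIZE}(F_G)$ equals the size of the maximum independent set in $G$, up to an easily controlled additive or multiplicative constant coming from the bookkeeping of variable lengths.

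The second step is to translate the hardness factor. Since $F_G$ has $n = O(N + |E|) = O(N^2)$ variables and $m = N$ clauses, an approximation of $\mathrm{MAX\text{-}MLM}(F_G)$ within factor $n^{(1-\epsilon)/2}$ would give an approximation of the maximum independent set of $G$ within factor $O\!\left((N^2)^{(1-\epsilon)/2}\right) = O(N^{1-\epsilon})$. Choosing $\epsilon' < \epsilon$ and absorbing the constant, this contradicts Håstad's bound unless $P = NP$, which is exactly the claimed statement. One subtlety: the theorem is stated with the length $|\pi| = \sum_\ell \log(1 + j_\ell)$, but here every term is multilinear (all exponents $1$), so $|\pi|$ is just the number of variables in $\pi$, and the reduction must ensure that the non-private edge-variables do not inflate $|\pi|$ in a way that destroys the gap; the cleanest fix is to make the private variable $x_i$ itself a long block — say $x_i$ replaced by $t$ distinct private variables — so that the independent-set size dominates the edge-variable contribution, or alternatively to count only a designated subset of variables toward the objective, which is a legitimate feature of the construction since one can force those designated variables to appear by padding.

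The main obstacle I expect is precisely this length-normalization: getting the quantity $\mathrm{MAX\text{-}SIZE}(F_G)$ to track the independent set number \emph{tightly enough} that an $n^{(1-\epsilon)/2}$-approximation still yields an $N^{1-\epsilon'}$-approximation for independent set, given that $n$ is quadratic in $N$ rather than linear. The factor-$2$ in the exponent is clearly calibrated to exactly this quadratic blow-up — $n \approx N^2$ makes $n^{1/2} \approx N$ — so the reduction genuinely needs $|E| = O(N^2)$ variables and no more, and the additive/multiplicative slack introduced by edge-variables must be pushed into a lower-order term. A secondary, easier point is to confirm that the constructed polynomial really is a $\Pi_m\Sigma_s\Pi_t$ polynomial for appropriate (not necessarily constant) $s$ and $t$: here each clause is a single term (so $s = 1$, or we can pad to $s = 2$), and $t$ is the maximum degree, which is $O(N)$; the theorem statement places no constancy restriction on $s$ or $t$, so this is fine. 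Once the normalization is handled, plugging the parameters into Håstad's theorem closes the argument.
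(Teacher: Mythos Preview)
Your approach is essentially the paper's: reduce from Maximum Independent Set by associating to each edge a conflict variable and to each vertex a term built from its incident edge-variables, so that a collection of terms multiplies to a multilinear monomial iff the corresponding vertices are independent. The paper differs from your sketch in two ways worth noting. First, instead of one singleton clause per vertex, it uses the single sum-clause $(T(v_1)+\cdots+T(v_n))$ raised to the $n$-th power; either structure works, but the paper's gives $s=n$ rather than $s=1$. Second, and more to the point of the obstacle you flagged, the paper resolves the length-normalization issue not by making private blocks long but by padding each $T(v_i)$ with fresh dummy variables $y_{ij}$ so that \emph{every} term has length exactly $n-1$; then MAX-SIZE$(F)=\mathcal{K}(n-1)$ on the nose, the total variable count is at most $n + n(n-1)=n^2$, and dividing the approximate length by $n-1$ yields an $n^{1-\epsilon}$-approximation to $\mathcal{K}$ with no slack to absorb. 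Your long-block idea also works provided the block length is $\Theta(N)$ (so the variable count stays $O(N^2)$ and the degree contribution is only a constant-factor distortion), but the equal-length padding is cleaner. One small correction: the $N^{1-\epsilon}$ inapproximability of independent set under the plain assumption $P\neq NP$ is due to Zuckerman, not H{\aa}stad; H{\aa}stad's original result needed a randomized-complexity assumption.
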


\begin{proof}
We shall reduce the maximum independent set problem to the MAX-MLM
problem. Let $G=(V,E)$ be any given indirected graph with $V =
\{v_1, \ldots, v_n\}$. For each edge $(v_i, v_j) \in E$, we design
a variable $x_{ij}$ representing this edge.  For each vertex
$v_i\in V$, let $d(v_i)$ denote the number of edges connecting to
it and define a term $T(v_i)$ as follows:
\[
 T(v_i)  =  \left\{
 \begin{array}{ll}
\prod_{(v_i, v_j)\in E} ~x_{ij}, &   \mbox{if~} d(v_i) = n-1,  \\
\left(\prod_{(v_i, v_j)\in E} ~x_{ij}\right) \cdot
\left(\prod_{j=1}^{n-1 - d(v_i)} ~y_{ij}\right), &   \mbox{if~}
d(v_i) < n-1.
\end{array}
\right.
\]
We now define  a polynomial $F(G)$ for the graph $G$ as
\begin{eqnarray}
 F(G) &=& (T(v_1) + \cdots + T(v_n))^n. \nonumber
\end{eqnarray}
From the above definitions we know that all terms $T(v_i), 1\le
i\le n$, have the same length $n-1$. The number of new variables
added to define $F(G)$ is at most $n(n-1)$.

Suppose that $G$ has an independent set of $k$ vertices $v_{i_1},
\ldots, v_{i_k}$. Then there is no edge to connect $v_{i_j}$ and
$v_{i_\ell}$ for $1\le j, \ell \le k$ and $j\not=\ell$. This means
that terms $T(v_{i_j})$ and $T(v_{i_\ell})$ do not have any common
variables, so $\pi = T(v_{i_1}) \cdots T(v_{i_k})$ is multilinear
with length $k(n-1)$. On the other hand, suppose that we can
choose terms $T(v_{t_1}), \ldots, T(v_{t_f})$ such that $\pi'=
T(v_{t_1}) \cdots T(v_{t_f})$ is multilinear. Then, there are no
edges connecting any two pairs of vertices $v_{t_j}$ and
$v_{t_\ell}$ for $1\le j, \ell \le k$ and $j\not=\ell$. This
further implies that vertices $v_{t_1}, \ldots, v_{t_f}$ form an
independent set of size $f$ in $G$. Notice that $|\pi'| = f
(n-1)$.

It follows from the above analysis that $G$ has a maximum
independent set of size ${\cal K}$ iff $F(G)$ has a
MAX-multilinear monomial of length ${\cal K} (n-1)$. Assume that
for any $\epsilon>0$, there is a polynomial time algorithm ${\cal
A}$ to approximate the MAX-MLM problem within an approximation
factor of $n^{(1-\epsilon)/2}$. On the input polynomial $F(G)$, we
can use ${\cal A}$ to find a multilinear monomial ${\cal A}(F(G))$
that satisfies
\begin{eqnarray}\label{max-exp-1}
{\cal K} (n-1) &\le& [n+n(n-1)]^{(1-\epsilon)/2}~{\cal A}(F(G))
=n^{1-\epsilon}~{\cal A}(F(G)).
\end{eqnarray}
It follows from above (\ref{max-exp-1}) that
\begin{eqnarray}\label{max-exp-f}
{\cal K} &\le& n^{1-\epsilon}~\frac{{\cal A}(F(G))}{n-1}.
\end{eqnarray}
By (\ref{max-exp-f}), we have a factor $n^{1-\epsilon}$ polynomial
time approximation algorithm for the maximum independent set
problem. By Zuckerman's inapproximability lower bound of
$n^{1-\epsilon}$ \cite{zuckerman07} on the maximum independent set
problem, this is impossible unless P=NP.
\end{proof}

H$\dot{a}$stad \cite{hastad01} proved that there is no polynomial
time algorithm to approximate the MAX-2-SAT problem within a
factor of $\frac{22}{21}$. By this result, we can derive the
following inapproximability about the MAX-MLM problem for the
$\prod_m\sum_2\prod_2$. Notice that Chen and Fu proved
\cite{chen-fu10} that testing multilinear monomials in a
$\prod\sum_2\prod$ polynomial can be done in quadratic time.

\begin{theorem}\label{w-thm2}
Unless P=NP, there is no polynomial time algorithm to approximate
$\mbox{MAXM-MLM}(F)$ within a factor $1.0476$ for any given
$\prod_m\sum_2\prod_2$ polynomial $F$.
\end{theorem}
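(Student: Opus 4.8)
The plan is to reduce MAX-2-SAT to the MAX-MLM problem for $\prod_m\sum_2\prod_2$ polynomials, mirroring the gadget already used in the proof of Theorem~\ref{lowerbound-thm1} but now built from a 2-SAT instance rather than a graph. Given a MAX-2-SAT instance $\phi$ with $m$ clauses over $n$ Boolean variables, I would introduce, for each Boolean variable $z_i$, two algebraic variables $z_i$ and $\bar z_i$ (playing the roles of the two literals), and a ``consistency'' variable $w_i$ that forces a term representing $z_i$ and a term representing $\bar z_i$ to collide. For each clause $f_j = (\ell \vee \ell')$ I would create a clause of the polynomial, of the form $(X_{j,\ell} + X_{j,\ell'})$, where each term $X_{j,\ell}$ is a product of exactly two algebraic variables: the literal variable and a fresh ``slot'' variable unique to that occurrence. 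The top-level polynomial would be the product of all these clauses (each with two terms, degree-2 terms, so a genuine $\prod_m\sum_2\prod_2$ polynomial), possibly raised to a power or padded so that a multilinear monomial corresponds exactly to a consistent truth assignment satisfying a chosen subset of clauses.

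The key structural claim I would then establish is: $\phi$ has an assignment satisfying $k$ of the $m$ clauses if and only if $F(\phi)$ has a multilinear monomial of length exactly $\ell(k)$ for some explicit affine function $\ell(k) = ak+b$ with $a>0$ (most simply, picking one term per satisfied clause contributes a fixed number of variables, and the collision variables $w_i$ block picking both a positive and a negative occurrence of the same Boolean variable). One direction: a satisfying assignment tells us which literal to ``use'' in each satisfied clause, and since all used literals are consistent, no $w_i$ causes a collision, so the corresponding product of terms is multilinear. The other direction: from a multilinear monomial we read off, for each Boolean variable, whether its positive or negative literal terms appear; multilinearity of the collision variables guarantees we never see both, so this defines a consistent assignment, and each clause contributing a term is satisfied by it. The slot variables guarantee that two terms from distinct clauses never collide spuriously, so the length is governed entirely by how many clauses we draw a term from.

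With this correspondence in hand, a factor-$\gamma$ approximation for MAX-MLM on $F(\phi)$ yields a factor-$\gamma'$ approximation for MAX-2-SAT, where $\gamma'$ is obtained from $\gamma$ by inverting the affine transfer function $\ell(k)$ and taking the appropriate limit as the instance size grows (the additive constant $b$ washes out, leaving essentially $\gamma' \to \gamma^{?}$ up to the ratio determined by $a$ and $b$). The numerical target $1.0476$ should then come out of H\r{a}stad's $22/21$ inapproximability threshold for MAX-2-SAT composed with this transfer function; I would choose the gadget parameters (how many slot and padding variables per clause, and whether to raise $F$ to the $m$-th power as in Theorem~\ref{lowerbound-thm1}) precisely so that $\gamma < 1.0476$ on the polynomial side forces $\gamma' < 22/21$ on the MAX-2-SAT side. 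The main obstacle will be calibrating the gadget so that the length of a multilinear monomial is a clean affine function of the number of satisfied clauses \emph{and} so that the resulting ratio pins down the constant $1.0476$ rather than some weaker bound; in particular I must ensure the ``unsatisfied'' clauses cannot sneak in extra length (which is why every term must carry a private slot variable) and that the padding does not dilute the gap below the desired threshold. Once the affine transfer is exact, the reduction is immediate and the inapproximability constant is just arithmetic on $22/21$.
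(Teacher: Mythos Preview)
Your high-level plan---reduce MAX-2-SAT to MAX-MLM and invoke H\r{a}stad's $22/21$ bound---matches the paper's. But the concrete gadget you sketch has a genuine bug, and the paper's fix is exactly the idea you are missing.

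In your encoding, every positive occurrence of the Boolean variable $z_i$ is represented by a term of the form $z_i \cdot (\text{fresh slot})$. If $z_i$ occurs positively in two different clauses, the corresponding terms both contain the \emph{same} algebraic variable $z_i$, so their product is not multilinear. Hence your monomial can include at most one clause per literal, and the intended correspondence ``$k$ satisfied clauses $\leftrightarrow$ multilinear monomial of length $\ell(k)$'' fails. The consistency variable $w_i$ does not help: with only two factors per term you cannot simultaneously (i) make opposite literals collide, (ii) keep same-literal occurrences independent, and (iii) keep different Boolean variables independent.

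The paper resolves this by first passing to a bounded-occurrence instance (each Boolean variable appears at most three times, with at most two positive and one negative occurrence) and then encoding each \emph{occurrence} by a distinct pair of fresh variables: the two positive occurrences of $x_i$ get disjoint pairs $y_{i1}y_{i2}$ and $y_{i3}y_{i4}$, while $\bar{x}_i$ gets $y_{i1}y_{i3}$, which overlaps both positive pairs but neither positive pair overlaps the other. This simultaneously enforces consistency (you cannot mix $x_i$ and $\bar{x}_i$) and allows selecting several clauses that all use the same literal. Because every term has length exactly $2$, the transfer function is $\ell(k)=2k$ with no additive constant, so the approximation ratio carries over verbatim: $1.0476 \le 22/21$, and no ``washing out'' argument is needed. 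Your worry about calibrating $a$ and $b$ disappears once the gadget is designed this way.
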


\begin{proof}
We reduce the MAX-2-SAT problem to the MAX-MLM problem for
$\prod_m\sum_2\prod_2$ polynomials. Let $F=F_1 \wedge \cdots
\wedge F_m$  be a 2SAT formula. Without loss of generality, we
assume that every variable $x_i$ in $F$ appears at most three
times, and if $x_i$ appears three times, then $x_i$ itself occurs
twice and $\bar{x}_i$ once. (It is easy to see that a simple
preprocessing procedure can transform any 2SAT formula to satisfy
these properties.) The reduction is similar to, but with subtle
differences from, the one that was used in \cite{chen-fu10} to
reduce a 3SAT formula to a $\prod_m\sum_3\prod_2$ polynomial.

If $x_i$ (or $\bar{x}_i$) appears only once in $F$ then we replace
it by $y_{i1}y_{i2}$. When $x_i$ appears twice, then we do the
following: If $x_i$ (or $\bar{x}_i$) occurs twice, then replace
the first occurrence by $y_{i1}y_{i2}$ and the second by
$y_{i3}y_{i4}$. If both $x_i$ and $\bar{x}_i$ occur, then replace
both occurrences by $y_{i1}y_{i2}$. When $x_i$ occurs three times
with $x_i$ appearing twice and $\bar{x}_i$ once, then replace the
first $x_{i}$ by $y_{i1}y_{i2}$ and the second by $y_{i3}y_{i4}$,
and replace $\bar{x}_i$ by $y_{i1}y_{i3}$.

Let $G = G_1\cdots G_m$ be the polynomial resulted from the above
replacement process. Here, $G_i$ corresponds to  $F_i$ with
boolean literals being replaced. Clearly, $F$ is a
$\Pi_m\Sigma_2\Pi_2$ polynomial and  every term in each clause has
length 2. For each literal $\tilde{x}_i$ in $F$, let
$t(\tilde{x}_i)$ denote the replacement of new variables for
$\tilde{x}_i$. For each term $T$ in $G$, $t^{-1}(T)$ denotes the
literal such that $T$ is the replacement of new variables for it.
From the definitions of the replacements, it is easy to see that
the clauses $F_{i_1}, \ldots, F_{i_s}$ in $F$ are satisfied by
setting literals $\tilde{x}_{i_j}\in F_{i_j}$ true, $1\le j\le s$,
iff $\pi = t(\tilde{x}_{i_1}) \cdots t(\tilde{x}_{i_s})$ is
multilinear with $t(\tilde{x}_{i_j})$ being a term in $G_{i_j}$,
$1\le j\le s$. This implies that the maximum number of the clauses
in $F$ can be satisfied by any true assignment is ${\cal K}$ iff a
MAX-multilinear monomial in $G$ has length $2{\cal K}$.

Now, assume that there is a polynomial time approximation
algorithm {\cal A} to find a MAX-multilinear monomial in $G$
within a factor of $1.0476$ Apply the algorithm ${\cal A}$ to $G$
and let ${\cal A}(G)$ denote the MAX-multilinear monomial returned
by ${\cal A}$. We have
\begin{eqnarray}\label{max-exp-2}
2{\cal K} & \le & 1.0476~ {\cal A}(G) \le \frac{22}{21}~{\cal A}(G), \nonumber\\
{\cal K} &\le & \frac{22}{21}~\frac{{\cal A}(G)}{2}. \nonumber
\end{eqnarray}
Thus, we have a polynomial time algorithm  that approximates
 the MAX-2-SAT problem within a factor of $\frac{22}{21}$. By
H$\dot{a}$stad's inapproximability lower bound on the MAX-2-SAT
problem \cite{hastad01}, this is not possible unless P=NP.
\end{proof}

Khot {\em at el.}  \cite{khot04} proved that assuming the Unique
Games Conjecture, there is no polynomial time algorithm to
approximate the MAX-2-SAT problem within a factor of
$\frac{1}{0.943}$. Notice that $\frac{1}{0.943}  > 1.0604
> \frac{22}{21}>1.0476$. This tighter lower bound and the analysis
in the proof of Theorem \ref{w-thm2} implies the following tighter
lower bound on the inapproximability of the MAX-MLM problem.

\begin{theorem}\label{w-thm3}
Assuming the Unique Games Conjecture, there is no polynomial time
algorithm to approximate $\mbox{MAXM-MLM)}(F)$ within a factor
$1.0604$ for any given $\prod_m\sum_2\prod_2$ polynomial $F$.
\end{theorem}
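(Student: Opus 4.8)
The plan is to mirror the reduction used in the proof of Theorem~\ref{w-thm2}, changing only the inapproximability result being invoked. Recall that in that proof we constructed, from any 2SAT formula $F = F_1 \wedge \cdots \wedge F_m$, a $\Pi_m\Sigma_2\Pi_2$ polynomial $G = G_1 \cdots G_m$ via the literal-replacement process (single occurrences become $y_{i1}y_{i2}$, double occurrences become $y_{i1}y_{i2}$ and $y_{i3}y_{i4}$ or both $y_{i1}y_{i2}$, triple occurrences become $y_{i1}y_{i2}$, $y_{i3}y_{i4}$, $y_{i1}y_{i3}$), with the property that the maximum number of clauses of $F$ simultaneously satisfiable equals $\mathcal{K}$ if and only if a MAX-multilinear monomial in $G$ has length $2\mathcal{K}$. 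That correspondence is exact and is precisely what we reuse here; I would simply cite the construction and its key equivalence from the proof of Theorem~\ref{w-thm2} rather than repeating it.

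The only new ingredient is the sharper lower bound: Khot \emph{et al.}~\cite{khot04} show that, assuming the Unique Games Conjecture, MAX-2-SAT admits no polynomial time approximation within a factor of $\frac{1}{0.943}$. So the plan is: suppose for contradiction there is a polynomial time algorithm $\mathcal{A}$ approximating $\mbox{MAX-MLM}(G)$ within factor $1.0604$ for $\Pi_m\Sigma_2\Pi_2$ polynomials. Run $\mathcal{A}$ on $G$, obtaining a multilinear monomial $\mathcal{A}(G)$ with $2\mathcal{K} \le 1.0604\,|\mathcal{A}(G)|$. Using the equivalence, $|\mathcal{A}(G)|/2$ is a true assignment satisfying that many clauses, and we get $\mathcal{K} \le 1.0604 \cdot \frac{|\mathcal{A}(G)|}{2} < \frac{1}{0.943} \cdot \frac{|\mathcal{A}(G)|}{2}$ since $1.0604 < \frac{1}{0.943}$. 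This is a factor-$\frac{1}{0.943}$ approximation for MAX-2-SAT, contradicting the Khot \emph{et al.} bound under the Unique Games Conjecture.

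There is really no substantive obstacle here — the theorem is a direct corollary of Theorem~\ref{w-thm2}'s construction combined with a stronger hardness hypothesis. The one point requiring a sentence of care is the numerical inequality $1.0604 < \frac{1}{0.943}$, which holds since $\frac{1}{0.943} \approx 1.06045$; the stated $1.0604$ is chosen to sit strictly below this threshold (and, as the excerpt notes, strictly above $\frac{22}{21} \approx 1.0476$, so it is genuinely a tighter bound than Theorem~\ref{w-thm2}). I would also note in passing that the reduction preserves the $\Pi_m\Sigma_2\Pi_2$ format (each clause still has at most two terms, each term of degree exactly $2$), so the hardness transfers to exactly the promised class. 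The write-up should thus be short: one paragraph recalling the reduction of Theorem~\ref{w-thm2}, one displaying the chain of inequalities, and the contradiction.

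\begin{proof}
We reuse the reduction from the proof of Theorem~\ref{w-thm2}. Given any 2SAT formula $F = F_1 \wedge \cdots \wedge F_m$ (preprocessed so that each variable occurs at most three times in the form described there), let $G = G_1 \cdots G_m$ be the $\Pi_m\Sigma_2\Pi_2$ polynomial obtained by the literal-replacement process. As shown in that proof, every term in every clause of $G$ has length $2$, and the maximum number of clauses of $F$ that can be simultaneously satisfied by a truth assignment equals $\mathcal{K}$ if and only if a MAX-multilinear monomial in $G$ has length $2\mathcal{K}$.

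Suppose, for contradiction, that there is a polynomial time algorithm $\mathcal{A}$ that approximates $\mbox{MAX-MLM}(F)$ within a factor of $1.0604$ for $\Pi_m\Sigma_2\Pi_2$ polynomials. Apply $\mathcal{A}$ to $G$ and let $\mathcal{A}(G)$ be the multilinear monomial it returns. Since $|\mathcal{A}(G)|$ is even and corresponds, via the equivalence above, to a truth assignment of $F$ satisfying $|\mathcal{A}(G)|/2$ clauses, we have
\begin{eqnarray}
2\mathcal{K} &\le& 1.0604~|\mathcal{A}(G)| < \frac{1}{0.943}~|\mathcal{A}(G)|, \nonumber\\
\mathcal{K} &<& \frac{1}{0.943}~\frac{|\mathcal{A}(G)|}{2}, \nonumber
\end{eqnarray}
using $1.0604 < \frac{1}{0.943} \approx 1.06045$. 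Hence we obtain a polynomial time algorithm approximating the MAX-2-SAT problem within a factor of $\frac{1}{0.943}$. By the inapproximability lower bound of Khot \emph{et al.}~\cite{khot04} for MAX-2-SAT, this is impossible assuming the Unique Games Conjecture.
\end{proof}
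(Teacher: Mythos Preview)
Your proposal is correct and matches the paper's own treatment exactly: the paper does not give a separate proof of this theorem but simply notes that the analysis in the proof of Theorem~\ref{w-thm2}, combined with the Khot \emph{et al.}\ lower bound of $\frac{1}{0.943}$ for MAX-2-SAT under the Unique Games Conjecture, yields the stated result. Your write-up spells out precisely this substitution, including the numerical check $1.0604 < \frac{1}{0.943}$ that the paper mentions in the sentence preceding the theorem.
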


{\bf Remark.} When the MAX-MLM problem is considered for
$\Pi_m\Sigma_2\Pi_2$ polynomials, Theorem \ref{max-thm1} gives an
upper bound of $2$ on the approximability of this problem, while a
lower bound of $1.0476$ is given by Theorem \ref{w-thm2} assuming
$P \not=NP$, and a stronger $1.0604$ lower bound is derived by
Theorem \ref{w-thm3} assuming the Unique Games Conjecture. There
are two gaps between the upper bound and the respective lower
bounds. It would be interesting to investigate how much these two
gaps can be closed.

\section*{Acknowledgments}

We thank Yang Liu and Robbie Schweller for many valuable
discussions during our weekly seminar.  We thank Yang Liu for
presenting Koutis' paper \cite{koutis08} at the seminar. Bin Fu's
research is supported by an NSF CAREER Award, 2009 April 1 to 2014
March 31.

\end{document}